\keywords{Quantitative Automata, History-determinism, Token games}
\newcommand{\NotNeeded}[1]{}
\newcommand{\FullVersion}[1]{}
\newcommand{\NotNow}[1]{}
\newcommand{\CheckRelevance}[1]{}
\newcommand{\Subject}[1]{\paragraph*{#1.}}
\newcommand{\St}{~|~}
\newcommand{\Nat}{\ensuremath{\mathbb{N}}\xspace}
\newcommand{\Rat}{\ensuremath{\mathbb{Q}}\xspace}
\newcommand{\Reals}{\ensuremath{\mathbb{R}}\xspace}
\newcommand{\Func}[1]{{\mathsf{#1}}\xspace}
\newcommand{\A}{{\mathcal A}}
\newcommand{\B}{{\mathcal B}}
\newcommand{\D}{{\mathcal D}}
\newcommand{\trans}[3]{#1\xrightarrow[]{#2}#3}
\newcommand{\letter}{\sigma}
\newcommand{\weight}{x}
\newcommand{\Val}{\Func{Val}}
\newcommand{\Inf}{\Func{Inf}}
\newcommand{\Sup}{\Func{Sup}}
\newcommand{\Sum}{\Func{Sum}}
\newcommand{\DSum}{\Func{DSum}}
\newcommand{\Avg}{\Func{Avg}}
\newcommand{\LimInf}{\Func{LimInf}}
\newcommand{\LimSup}{\Func{LimSup}}
\newcommand{\Safety}{\Func{Safety}}
\newcommand{\Reach}{\Func{Reachability}}
\newcommand{\Proc}[1]{{\mathtt{#1}}}
\newcommand{\Polish}{\Proc{Polish}}
\newcommand{\PTime}{{\sc{PTime}}\xspace}
\newcommand{\NPcoNP}{{\sc{NP}$\cap$\sc{coNP}}\xspace}
\newcommand{\NP}{{\sc NP}\xspace}
\definecolor{darkgreen}{rgb}{0.13, 0.55, 0.13}
\newcommand{\LW}[2]{#1\!:\!#2}
\newcommand{\strat}{s}
\renewcommand{\epsilon}{\varepsilon}
\newcommand{\HD}{\textsf{HD}\xspace}
\newcommand{\HDness}{\textsf{HDness}\xspace}
\newcommand{\ML}[1]{\begin{tabular}[c]{@{}c@{}} #1\end{tabular} } 
\newcolumntype{M}{>{\centering\arraybackslash}m{2.5cm}}
\newcommand{\Refer}[1]{{\footnotesize #1}}
\newcommand{\Complexity}[1]{#1}
\newcommand{\Ptime}{\Complexity{PTIME}}
\newcommand{\NiceCheckMark}{\ML{\\[-22pt]{\large \ding{51}\normalsize}\\[-15pt]}}
\newcommand{\NiceCross}{\ML{\\[-22pt]{\large \ding{55}\normalsize}\\[-15pt]}}
\crefname{defi}{Definition}{Definitions} 
\crefname{rem}{Remark}{Remarks} 
\crefname{thm}{Theorem}{Theorems}
\crefname{lem}{Lemma}{Lemmas}
\crefname{prop}{Proposition}{Propositions}
\crefname{cor}{Corrolary}{Corrolaries}
\crefname{propC}{Proposition}{Propositions}
\begin{document}

\title[Quantitative Automata Token Games]{Token Games and History-Deterministic\texorpdfstring{\\}{ }Quantitative Automata}
\titlecomment{{\lsuper*}The present article extends \cite{BL22} with additional results on $\Inf$, $\Sup$ and $\Reach$ automata on infinite words as well as some additional discussion throughout.}
\thanks{Research supported by the Israel Science Foundation grant 2410/22.}	

\author[U.~Boker]{Udi Boker\lmcsorcid{0000-0003-4322-8892}}[a]
\author[K.~Lehtinen]{Karoliina Lehtinen\lmcsorcid{0000-0003-1171-8790}}[b]

\address{Reichman University, Herzliya, Israel}	

\address{CNRS, Marseille-Aix Universit\'e, LIS, Marseille, France}	

\begin{abstract}
A nondeterministic automaton is history-deterministic if its nondeterminism can be resolved by only considering the prefix of the word read so far. Due to their good compositional properties, history-deterministic automata are useful in solving games and synthesis problems.
Deciding whether a given nondeterministic automaton is history-deterministic (the \HDness problem) is generally a difficult task, which can involve an exponential procedure, or even be undecidable, as is the case for example with pushdown automata.
\textit{Token games} provide a \PTime solution to the \HDness problem of B\"uchi and coB\"uchi automata, and it is conjectured that $2$-token games characterise \HDness for all $\omega$-regular automata.

We extend token games to the quantitative setting and analyse their potential to help deciding \HDness of quantitative automata.
In particular, we show that $1$-token games characterise \HDness for all quantitative (and Boolean) automata on finite words, as well as discounted-sum ($\DSum$), $\Inf$ and $\Reach$ automata on infinite words, and that $2$-token games characterise \HDness of $\LimInf$ and $\LimSup$ automata, as well as $\Sup$ automata on infinite words.
Using these characterisations, we provide solutions to the \HDness problem of $\Safety$, $\Reach$, $\Inf$ and $\Sup$ automata on finite and infinite words in \PTime, of $\DSum$ automata on finite and infinite words in {\sc NP}$\cap${\sc co-NP}, of $\LimSup$ automata in quasipolynomial time, and of $\LimInf$ automata in exponential time, where the latter two are only polynomial for automata with a logarithmic number of weights. 
\end{abstract}

\maketitle

\section{Introduction}\label{sec:intro}
\subsection*{History-determinism}
A nondeterministic (quantitative) automaton is called  history-deterministic (\HD)~\cite{Col09,BL21}  if its nondeterministic choices can be resolved by only considering the word read so far, uniformly across possible suffixes (see \cref{fig:THDnotHD} for examples of \HD and non-\HD automata). More precisely, there should be a function (strategy), sometimes called a resolver, that maps the finite prefixes of a word to the transition to be taken at the last letter. The run built in this way must, in the Boolean setting, be accepting whenever the word is in the language of the automaton and, in the more general quantitative setting, attain the value of the automaton on the word (\textit{i.e.}, the supremum of all its runs' values).

History-determinism lies in between determinism and nondeterminism, enjoying in some aspects the best of both worlds: \HD automata are, like deterministic ones, useful for solving games and reactive synthesis \cite{HP06,Col09,HPR16,HPR17,CF19,GJLZ21,BL21}, yet can sometimes be more expressive and/or succinct.  For example, \HD  coB\"uchi and $\LimInf$ automata can be exponentially more succinct than deterministic ones \cite{KS15}, and \HD pushdown automata are both more expressive and at least exponentially more succinct than deterministic ones~\cite{LZ20,GJLZ21}.
In the ($\omega$-)regular setting, history-determinism coincides with good-for-gameness \cite{BL19}, a notion characterised by the compositional properties of the automaton \cite{HP06}, while in the quantitative setting it is stronger~\cite{BL21}. The problem of deciding whether a nondeterministic automaton is \HD is interreducible with deciding the \textit{best-value synthesis} problem of a deterministic automaton of the same type~\cite{FLW20,BL21}. In this quantitative version of the reactive synthesis problem, the system must guarantee a behaviour that matches the value of any global behaviour compatible with the environment's actions. The witness of \HDness corresponds exactly to the solution system of this synthesis problem,
providing another motivation for this line of research.

\subsection*{Deciding history-determinism -- a difficult task}
History-determinism is formally defined by a \emph{letter game} played on the automaton $\A$ between Adam and Eve, where Adam produces an input word $w$, letter by letter, and Eve tries to resolve the nondeterminism in $\A$ so that the resulting run attains $\A$'s value on $w$. Then $\A$ is \HD if Eve has a winning strategy in the letter game on it. 
The difficulty of deciding who wins the letter game stems from its complicated winning condition -- Eve wins if her run has the value of the supremum over all runs of $\A$ on $w$. 

The naive solution is to determinise $\A$ into an automaton $\D$, and consider a game equivalent to the letter game that has a simpler winning condition and whose arena is the product of $\A$ and $\D$ \cite{HP06}. The downside with this approach, however, is that it requires the determinisation of $\A$, which often involves a procedure exponential in the size of $\A$ and sometimes is even impossible due to an expressiveness gap.
Note that deciding whether an automaton is good-for-games, which is closely related to whether it is \HD \cite{BL19,BL21}, is also difficult, as  it requires reasoning about composition with all possible games.

\subsection*{Token games -- a possible aid}
The \textit{one-token-game}, which is closely related to the letter game but easier to solve, was introduced by L{\"{o}}ding in an algorithm for deciding determinisability-by-pruning of a regular automaton, and was formally defined in \cite[Definition 5]{LR13}. It was later generalised by Bagnol and Kuperberg~\cite{BK18} to a \textit{$k$-token-game} on $\omega$-regular automata, for a given $k\in \Nat$, in the course of seeking an easier to decide characterisation of history-determinism. 

In a $k$-token game on an automaton $\A$, denoted by $G_k(\A)$, like in the letter game, Adam generates a word $w$ letter by letter, and Eve builds a run on $w$ by resolving the nondeterminism. In addition, Adam also has to resolve the nondeterminism of $\A$ to build $k$ runs letter-by-letter over $w$. 
The winning condition for Eve in these games on Boolean automata is that either all runs built by Adam are rejecting, or Eve's run is accepting. Such games, as they compare a finite number of pairs of runs, are easier to solve than the letter game. 

Then, to decide \HDness of a class of automata, one can attempt to show that the letter game always has the same winner as a $k$-token game, for some $k$, and solve the $k$-token game. (If Eve wins the letter game then she wins the $k$-token game, for every $k$, by using the same strategy, ignoring Adam's runs. However, it might be that she wins a $k$-token game, taking advantage of her knowledge of how Adam resolves the nondeterminism, but loses the letter game.)

Bagnol and Kuperberg showed in \cite{BK18} that on B\"uchi automata, the letter game and the $2$-token game always have the same winner, and in \cite{BKLS20b}, Boker, Kuperberg, Lehtinen and Skrzypczak extended this result to coB\"uchi automata. In both cases, this allows for a \PTime procedure for deciding \HDness. 
Furthermore, Bagnol and Kuperberg suggested in \cite[Conclusion]{BK18} that $2$-token games might characterise \HDness also for parity automata (and therefore for all $\omega$-regular automata); a conjecture (termed later the $G2$ conjecture) that is still open.

\subsection*{Our contribution} 
We extend token games to the quantitative setting, and use them to decide the \HDness of some quantitative automata.
We define a $k$-token game on a quantitative automaton exactly as on a Boolean one, except that Eve wins if her run has a value at least as high as all of Adam's runs.

We show first, in \cref{sec:G1}, that the $1$-token game, in which Adam just has one run to build, characterises \HDness for all quantitative (and Boolean) automata on finite words, and for $\Safety$, $\Reach$, $\Inf$ and 
discounted-sum ($\DSum$) automata on infinite words. This results in a \PTime decision procedure for checking \HDness of $\Safety$, $\Reach$,  and $\Inf$ automata, and an \NPcoNP procedure for $\DSum$ automata, both on finite and infinite words. Note that the complexity for $\DSum$ automata on finite words was already known~\cite{FLW20}, but on infinite words it was erroneously believed to be \NP-hard \cite[Theorem 6]{HPR16}.

Towards getting the above results, we analyse key properties of value functions of quantitative automata, and show that the $1$-token game characterises \HDness for every $\Val$ automaton, such that $\Val$ is present-focused (\cref{def:PresentFocused}), which is in particular the case for all $\Val$ automata on finite words \cite[Lemma 16]{BL21}, as well as $\DSum$ \cite[Lemma 22]{BL21} and $\Inf$ automata on infinite words.

We then show, in \cref{sec:G2}, that the $2$-token game, in which Adam builds two runs, characterises \HDness for both $\LimSup$ and $\LimInf$ automata. The approach here is more involved: it decomposes the quantitative automaton into a collection of B\"uchi or coB\"uchi automata such that if Eve wins the $2$-token game on the original automaton, she also wins in the component automata. Since the $2$-token game characterises \HD for B\"uchi and coB\"uchi automata, the component automata are then \HD and the witness strategies can be combined with the $2$-token strategy of the original automaton to build a letter-game strategy for Eve. 

The general flow of our approach is illustrated in \cref{fig:Flow}. As a corollary, we obtain that $G_2$ also characterises \HDness for $\Sup$ automata on infinite words.

We further present, in \cref{sec:solveG2}, algorithms to decide the winner of the $2$-token games on $\LimInf$ and $\LimSup$ automata via reductions to solving parity games, and on $\Sup$ automata on infinite words via reduction to coB\"uchi games. The complexity of the procedure for a $\LimSup$ automaton $\A$ is the same as that of solving a parity game of size polynomial in the size of $\A$ with twice as many priorities as there are weights in $\A$, which is in quasipolynomial time. For $\LimInf$ automata the procedure is in exponential time. In both cases, it is in polynomial time if the number of weights is  logarithmic in the automaton size. For $\Sup$ automata, the procedure is always polynomial. These results are summarised in~\cref{tab:ComplexityTable}.

For some variants of the synthesis problem, the complexity of the witness of history-determinism is also of interest (for other variants it is not), as it corresponds to the complexity of the implementation of the solution system \cite[Section 5]{BL21}. 
We give an exponential upper bound to the complexity of the witness for $\LimSup$, $\LimInf$ and $\Sup$ automata, which, for $\LimInf$, is tight. As a corollary, we obtain that \HD $\LimSup$ automata are exactly as expressive as deterministic $\LimSup$ automata
and at most exponentially more succinct.

\subsection*{Related work}
In the $\omega$-regular setting (where \HDness coincides with good-for-gameness), \cite[Section 4]{HP06} provides an exponential scheme for checking \HDness of all $\omega$-regular automata, based on  determinisation and checking fair simulation.
\HDness of B\"uchi automata is resolved, as mentioned above, in \PTime, using $2$-token games \cite{BK18}. The coB\"uchi case is also resolved in \PTime, originally via an indirect usage of ``joker games'' \cite{KS15}, and later by using $2$-token games \cite{BKLS20b}.

In the quantitative setting, deciding \HDness coincides with best-value partial domain synthesis~\cite{FLW20}, $0$-regret synthesis~\cite{HPR17} and, for some value functions, $0$-regret determinisation~\cite{FJLPR17,BL21}. There are procedures to decide \HDness (which is sometimes called good-for-gameness due to erroneously assuming them equivalent) of $\Sum$, $\Avg$, and $\DSum$ automata on finite words, as follows.
For $\Sum$ and $\Avg$ automata on finite words, a \PTime solution combines \cite[Theorem 4.1]{AKL10}, which provides a \PTime algorithm for checking whether such an automaton is ``determinisable by pruning'', and \cite[Theorem 21]{BL21}, which shows that such an automaton is \HD if and only if it is determinisable by pruning.

\begin{prop}
	Deciding whether a $\Sum$ or $\Avg$ automaton on finite words is history-deterministic is in \PTime.
\end{prop}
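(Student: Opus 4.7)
The plan is to reduce the \HDness question to the question of whether the automaton is \emph{determinisable by pruning} (\emph{i.e.}, whether a deterministic equivalent can be obtained by removing transitions), and then invoke an existing polynomial-time procedure for the latter. Concretely, I would first cite \cite[Theorem 21]{BL21}, which asserts that a $\Sum$ or $\Avg$ automaton on finite words is \HD if and only if it is determinisable by pruning; this reduces the decision problem to a purely structural question about the existence of a deterministic sub-automaton computing the same function.

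Next, I would invoke \cite[Theorem 4.1]{AKL10}, which gives a \PTime algorithm that, on input a $\Sum$ or $\Avg$ automaton $\A$ on finite words, decides whether $\A$ is determinisable by pruning. Composing the two yields the desired \PTime procedure: given $\A$, run the \cite{AKL10} algorithm; return ``\HD'' exactly when it answers ``determinisable by pruning''.

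The argument is essentially a citation composition, so there is no substantial obstacle; the only point that deserves a sentence of care is that both cited results apply uniformly to $\Sum$ and $\Avg$ automata on \emph{finite} words (and that the equivalence ``\HD iff determinisable by pruning'' is specific to these value functions on finite words, unlike the situation on infinite words, which is exactly why we cannot extend this shortcut to the $\Sum$/$\Avg$ automata on infinite words). Having noted this, the proof reduces to a one-line statement combining the two theorems.
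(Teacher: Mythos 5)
Your proposal is correct and matches the paper's argument exactly: the paper likewise combines \cite[Theorem 21]{BL21} (\HD iff determinisable by pruning for $\Sum$/$\Avg$ on finite words) with the \PTime determinisability-by-pruning check of \cite[Theorem 4.1]{AKL10}. Nothing further is needed.
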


For $\DSum$ automata on finite words, \cite[Theorem 23]{FLW20} provides an  {\sc NP}$\cap${\sc co-NP} solution, using a game that is quite similar to the $1$-token game, differing from it in a few aspects -- for example, Adam is asked to either copy Eve with his token or move into a second phase where he plays transitions first -- and  uses a characterisation of \HD strategies resembling our notion of cautious strategies (\cref{def:Cautious}) specialised to $\DSum$ automata. 

\begin{table}[h!]
	\begin{center}
		\def\arraystretch{1.1} 
		\resizebox{\textwidth}{!}{
			\begin{tabular}{c||M|M|M|M|}
				\ \multirow{2}{*}{Automata Type~}& \multicolumn{2}{c|}{\HD is characterised by} & \multicolumn{2}{c|}{\HDness Complexity} \\ 
				\cline{2-5}
				\ & $G_1$ & $G_2$& Finite words & Infinite words \\ \hline
				\hline
				All automata on& \multicolumn{2}{c|}{ \NiceCheckMark}&\multicolumn{2}{c|}{\multirow{2}{*}{Varies}} \\
				finite words& \multicolumn{2}{c|}{\Refer{\cref{cl:FiniteWords}}}&\multicolumn{2}{c|}{}\\
				\hline
				
				All automata on& {\bf Not all} & {\bf Open} &\multicolumn{2}{c|}{\multirow{2}{*}{Varies}} \\
				infinite words& \Refer{\cite[Lemma 8]{BK18}}&\cite{BK18}&\multicolumn{2}{c|}{}\\
				\hline
				
				\multirow{2}{*}{$\Safety$} &  \multicolumn{2}{c|}{ \NiceCheckMark}   & \multicolumn{2}{c|}{\Complexity{\Ptime}} \\
				& \multicolumn{2}{c|}{\Refer{\cref{cl:InfIsG1}}} &\Refer{\cref{cl:solveG1-reach-safe}} & \Refer{\cref{cl:solveG1-reach-safe-infinite-words}}   \\
				\hline
				
				\multirow{2}{*}{$\Reach$} &  \multicolumn{2}{c|}{ \NiceCheckMark}   & \multicolumn{2}{c|}{\Complexity{\Ptime}} \\
				& \multicolumn{2}{c|}{\Refer{\cref{cl:Reachability}}} &\Refer{\cref{cl:solveG1-reach-safe}} & \Refer{\cref{cl:solveG1-reach-safe-infinite-words}}   \\
				\hline

				\multirow{2}{*}{$\Inf$} &  \multicolumn{2}{c|}{ \NiceCheckMark}   & \multicolumn{2}{c|}{\Complexity{\Ptime}} \\
				& \multicolumn{2}{c|}{\Refer{\cref{cl:InfIsG1}}} &\multicolumn{2}{c|}{\Refer{\cref{cl:InfPtimeInfiniteWords}} }  \\
				\hline
				
				\multirow{2}{*}{$\Sup$} &  \NiceCross& \NiceCheckMark   & \multicolumn{2}{c|}{\Complexity{\Ptime}} \\
				& \Refer{\cref{cl:SupNotG1}} & \Refer{\cref{cl:G2-oSup}} & \Refer{\cref{cl:FiniteSupPtime}}&\Refer{\cref{cl:solveG2-Sup}}   \\
				\hline
				
				\multirow{2}{*}{$\DSum$} &  \multicolumn{2}{c|}{ \NiceCheckMark}   & \multicolumn{2}{c|}{\Complexity{{\sc NP}$\cap${\sc co-NP}}} \\
				& \multicolumn{2}{c|}{\Refer{\cref{cl:DSumG1}}} &\multicolumn{2}{c|}{\Refer{\cref{cl:DSum-NPcoNP}}}   \\
				\hline
				
				\multirow{2}{*}{$\LimInf$} &  \NiceCross& \NiceCheckMark      & \multirow{2}{*}{-} & Quasipoly. \\
				& \Refer{\cite[Lemma 8]{BK18}} &\Refer{\cref{cl:LimInfSupG2}} &&\Refer{\cref{cl:solveG2liminf}}   \\
				\hline
				
				\multirow{2}{*}{$\LimSup$} &  \NiceCross& \NiceCheckMark    & \multirow{2}{*}{-}& Quasipoly.\\
				& \Refer{\cite[Lemma 8]{BK18}}&\Refer{\cref{cl:LimInfSupG2}}&&\Refer{\cref{cl:solveG2LimSup}}  \\
				\hline

			\end{tabular}
		} 
	\end{center}
	\caption{Characterisation of history-determinism by 1- and 2-token games (the characterisation of the specific automata types refers to automata on infinite words), and the complexity of checking whether an automaton is history-deterministic.}
	\label{tab:ComplexityTable}
\end{table}

\section{Preliminaries}\label{sec:Preliminaries}

\Subject{Words}
An \emph{alphabet} $\Sigma$ is a finite nonempty set of letters. A finite (resp.\ infinite) \emph{word} $u=\sigma_0 \ldots \sigma_k\in \Sigma^{*}$ (resp.\ $w=\sigma_0 \sigma_1\ldots\in \Sigma^{\omega}$) is a finite (resp.\ infinite) sequence of letters from $\Sigma$; $\epsilon$ is the empty word.  
We write $\Sigma^\infty$ for $\Sigma^* \cup \Sigma^\omega$.
We use $[i..j]$ to denote a set $\{i,\ldots,j\}$ of integers, $[i]$ for $[i..i]$, $[..j]$ for $[0..j]$, and $[i..]$ for integers equal to or larger than $i$. We write $w[i..j], w[..j]$, and $w[i..]$ for the infix $\sigma_i \ldots \sigma_j$, prefix $\sigma_0 \ldots \sigma_j$, and suffix $\sigma_i \ldots$ of $w$.
A \emph{language} is a set of words.

\Subject{Games}

We consider a variety of turn-based zero-sum games between Adam (A) and Eve (E). Formally, a game is played on an arena of which the positions are partitioned between the two players. A play is a maximal (finite or infinite) path. The winning condition partitions plays into those that are winning for each player. In some of the technical developments we use \textit{parity games}, in which moves are coloured with integer priorities and a play is winning for Eve if the maximal priority that occurs infinitely often along the play is even. A coB\"uchi game is the special case of a parity game with  priorities $1$ and $0$. A weak game is the special case of a coB\"uchi game in which priorities $0$ and $1$ do not both occur within a single cycle.

A strategy for a player $P\in \{A,E\}$ maps partial plays ending in a position belonging to $P$ to a successor position. A (partial) play $\pi$ agrees with a strategy $\strat_P$ of $P$, written $\pi\in \strat_P$, if whenever its prefix $p$ ends in a position of $P$, the next move is $\strat_P(p)$. A strategy of $P$ is winning from a position $v$ if all plays starting at $v$ that agree with it are winning for $P$. A strategy is positional if it maps all plays that end in the same position to the same successor. 

\Subject{Quantitative Automata}

A \emph{nondeterministic quantitative\footnote{We speak of ``quantitative'' rather than ``weighted'' automata, following the distinction made in \cite{Bok21} between the two.} automaton} (or just automaton from here on) on words is a tuple $\A=(\Sigma,Q,\iota,\delta)$, where $\Sigma$ is an alphabet; $Q$ is a finite nonempty set of states; $\iota\in Q$ is an initial state; and $\delta\colon Q\times \Sigma \to 2^{(\Rat \times Q)}$ is a transition function over weight-state pairs. 

A \emph{transition} is a tuple $(q,\letter,\weight,q')\in Q{\times}\Sigma{\times} \Rat\times Q$,  also written $\trans{q}{\letter:\weight }{q'}$. (There might be several transitions with different weights over the same letter between the same states.)
We write $\gamma(t)=\weight$ for the weight of a transition $t=(q,\letter,\weight,q')$.
$\A$ is  deterministic if for all $q\in Q$ and $a\in \Sigma$, $\delta(q,a)$ is a singleton. 
We require that the automaton $\A$ is $\emph{total}$, namely that for every state $q\in Q$ and letter $\letter\in\Sigma$, there is at least one state $q'$ and a transition $\trans{q}{\letter:\weight}{q'}$.

A run of $\A$ on a word $w$ is a sequence $\rho = \trans{q_0}{w[0]:\weight_0}{q_1}\trans{}{w[1]:\weight_1}{q_2}\ldots$ of transitions where $q_0=\iota$ and $(\weight_i,q_{i+1})\in \delta(q_i,w[i])$. As each transition $t_i$ carries a weight $\gamma(t_i)\in\Rat$, the sequence $\rho$ provides a weight sequence $\gamma(\rho) = \gamma(t_0) \gamma(t_1) \ldots$ 
A $\Val$ (e.g., $\Sum$) automaton  is one equipped with a \emph{value function} $\Val:\Rat^* \to \Reals$ or $\Val:\Rat^\omega \to \Reals$, which assigns real values to runs of $\A$. 
The value of a run $\rho$ is $\Val(\gamma(\rho))$. 
The value of $\A$ on a word $w$ is the supremum of $\Val(\rho)$ over all runs $\rho$ of $\A$ on $w$.
Two automata $\A$ and $\A'$ are \emph{equivalent}, if they realise the same function from words to reals. The size of an automaton consists of the maximum among the size of its alphabet, state-space, and transition-space.

\Subject{Value functions}\
We list below the value functions that we will consider in the sequel.

\vspace{3pt}
\noindent For finite sequences $v_0 v_1 \ldots v_{n-1}$ of rational weights:
\vspace{-.3cm}
\begin{multicols}{2}
	\begin{itemize}
		\item $\displaystyle \Sum(v) = \sum_{i=0}^{n-1} v_i$
		\item $\displaystyle \Avg(v) = \frac{1}{n} \sum_{i=0}^{n-1} v_i$
	\end{itemize}
\end{multicols}

\noindent For finite and infinite sequences $v_0 v_1 \ldots$ of rational weights:
\vspace{-.3cm}
\begin{multicols}{2}
	\begin{itemize}
		\item $\displaystyle \Inf(v) = \inf\{v_n \St n \geq 0\}$
		\item $\displaystyle \Sup(v) = \sup\{v_n \St n \geq 0\}$
	\end{itemize}
\end{multicols}
\vspace{-.3cm}
\begin{itemize}
	\item For a discount factor $\lambda\in\Rat\cap(0,1)$, $\lambda$-$\displaystyle \DSum(v) = \sum_{i\geq 0} \lambda^i  v_i$
\end{itemize}

\noindent  For infinite sequences $v_0 v_1 \ldots$ of rational weights:
\vspace{-.2cm}
\begin{multicols}{2}
	\begin{itemize}
		\item $\displaystyle \LimInf(v) = \lim_{n\to\infty}\limits\inf\{v_i \St i \geq n\}$
		\item $\displaystyle \LimSup(v) = \lim_{n\to\infty}\limits\sup\{v_i \St i \geq n\}$
	\end{itemize}
\end{multicols}

\noindent\textit{Regular and $\omega$-regular automata} (with acceptance on transitions) can be viewed as special cases of quantitative automata with weights in $\{0,1\}$, where the language of the automaton consists of words with value $1$. In particular, considering only weights in $\{0,1\}$, B\"uchi coincides with $\LimSup$, coB\"uchi with $\LimInf$, $\Reach$ with $\Sup$ and $\Safety$ with $\Inf$. With this in mind, $\Reach$ and $\Safety$ automata on finite and infinite words, are defined as $\A=(\Sigma,Q,\iota,\delta)$ as above, with weights $0$ and $1$ on transitions, and we assume that for $\Reach$ automata every accepting transition, that is a transition with weight $1$, leads to a sink with self-loops of weight $1$, called the target (we use this assumption in \cref{sec:G1Characterizes}), and for $\Safety$ automata, every rejecting transition, that is a transition with weight $0$, leads to a sink with self-loops of weight $0$. We call the rest of the automaton its safe region. We say that the automaton accepts a word if its value is $1$: A $\Reach$ automaton accepts words with runs that reach the target, while a $\Safety$ automaton accepts words with runs that remain in the safe region. See more on $\omega$-regular automata, e.g., in \cite{Bok18}. 

\Subject{History-determinism}
Intuitively, an automaton is history-deterministic if there is a strategy to resolve its nondeterminism according to the word read so far such that for every word, the value of the resulting run is the value of the word.
\begin{defi}[History-determinism \cite{Col09,BL21}]  \label{def:HistoryDet}
	A $\Val$ automaton $\A$ is \emph{history-deterministic} (\HD) if Eve wins the following win-lose \emph{letter game}, in which Adam chooses the next letter and Eve resolves the nondeterminism, aiming to construct a run whose value is equivalent to the generated word's value.
	\begin{description}
		\item[Letter game] 
		A play begins in $q_0=\iota$ (the initial state of $\A$) and at the $i^{th}$ turn, from state $q_i$, it progresses to a next state $q_{i+1}$ as follows:
		\begin{itemize}
			\item  Adam picks a letter $\sigma_{i}$ from $\Sigma$ and then
			\item  Eve chooses a transition $t_i=\trans{q_{i}}{\sigma_{i}:\weight_{i}}{q_{i+1}}$.
		\end{itemize}
		In the limit, a play consists of an infinite word $w$ that is derived from the concatenation of $\sigma_0,\sigma_1,\ldots$, as well as an infinite sequence $\pi = t_0,t_1,\ldots$ of transitions.
		For $\A$ on infinite words, Eve wins a play in the letter-game if $\Val(\pi) \geq \A(w)$. 
		For $\A$ on finite words, Eve wins if  for all $i\in\Nat$, $\Val(\pi[0..i]) \geq \A(w[0..i])$. 
	\end{description}
\end{defi}

Consider for example the $\LimSup$ automaton $\A$ in \cref{fig:THDnotHD}. Eve loses the letter game on $\A$: 
Adam can start with the letter $a$; then if Eve goes from $s_0$ to $s_1$, Adam continues to choose $a$ forever, generating the word $a^\omega$, where $\A(a^\omega)=3$, while Eve's run has the value $2$. If, on the other hand, Eve chooses on her first move to go from $s_0$ to $s_2$, Adam continues with choosing $b$ forever, generating the word $ab^\omega$, where $\A(ab^\omega)=2$, while Eve's run has the value $1$.

\Subject{Families of value functions}

We will provide some of our results with respect to a family of $\Val$ automata based on properties of the value function $\Val$. 

We first define \textit{cautious 
	strategies} for Eve in both the letter game and token games (\cref{sec:TokenGames}), which we use to define \textit{present-focused} value functions.
Intuitively, a strategy is cautious if it avoids mistakes: it only builds run prefixes that can achieve the maximal value of any continuation of the current word prefix. 

\begin{defi}[Cautious strategies \cite{BL21}]\label{def:Cautious}
	Consider the letter game on a $\Val$ automaton $\A$, in which Eve builds a run of $\A$ transition by transition.
	A move (transition) $t=q\xrightarrow{\sigma:\weight}q'$ of Eve, played after some run $\rho$ ending in a state $q$, is \emph{non-cautious}
	if for some word $w$, there is a run $\pi'$ from $q$ over $\sigma w$ such that $\Val(\rho\pi')$ is strictly greater than the value of $\Val(\rho\pi)$ for any $\pi$ starting with $t$.
	A strategy is \emph{cautious} if it makes no non-cautious moves.
\end{defi}

A winning strategy for Eve in the letter game must of course be cautious; Whether all cautious strategies are winning depends on the value function. For example, a cautious strategy in a $\Safety$ automaton is obviously winning, as it inevitably remains in the safe region while Adam plays prefixes of words in the language, while a cautious strategy in a $\Reach$ automaton might not be winning as cautiousness does not require Eve to ever reach the target with her run. We call a value function \emph{present-focused} if, morally, it depends on the prefixes of the value sequence, formalised by winning the letter game via cautious strategies.

\begin{defi}[Present-focused value functions \cite{BL21}]\label{def:PresentFocused}
	A value function $\Val$, on finite or infinite sequences, is \emph{present-focused} if for all automata $\A$ with value function $\Val$, every cautious strategy for Eve in the letter game on $\A$ is also a winning strategy in that game.
\end{defi}

Value functions on finite sequences are present-focused, as they can only depend on prefixes, while value functions on infinite sequences are not necessarily present-focused \cite[Remark 17]{BL21}, for example $\LimInf$ and $\LimSup$.
\begin{propC}[{\cite[Lemma 16]{BL21}}]\label{cl:finite-is-present-focused}
	Every value function  $\Val$ on finite sequences of rational values is present focused.
\end{propC}

\begin{propC}[{\cite[Lemma 22]{BL21}}]\label{cl:DSumPresentFocused}
	For every $\lambda\in\Rat\cap(0,1)$, $\lambda$-$\DSum$ on infinite sequences of rational values is a present-focused value function.
\end{propC}

\section{Token Games}\label{sec:TokenGames}
The \textit{one-token-game} was introduced by L{\"{o}}ding and was then formally defined in \cite[Definition 5]{LR13}. It was later generalised by Bagnol and Kuperberg~\cite{BK18} to a \textit{$k$-token-game}, for a given $k\in \Nat$, in the scope of resolving the \HDness problem of B\"uchi automata.

In the $k$-\textit{token game}, known as $G_k$, the players proceed as in the letter game, except that now Adam has $k$ tokens that he must move after Eve has made her move, thus building $k$ runs. For Adam to win, at least one of these must be better than Eve's run. In the Boolean setting, this run must be accepting, thus witnessing that the word is in the language of the automaton. Intuitively, the more tokens Adam has, the less information he is giving Eve about the future of the word he is building.

We generalise token games to the quantitative setting, defining that the maximal value produced by Adam's runs witnesses a lower bound on the value of the word, and Eve's task is to match or surpass this value on her run.

In the Boolean setting, $G_2$ has the same winner as the letter game for B\"uchi~\cite[Corollary 21]{BK18} and coB\"uchi~\cite[Theorem 28]{BKLS20b} automata (the case of parity and more powerful automata is open). Since $G_2$ is solvable in polynomial time for B\"uchi and coB\"uchi acceptance conditions, this gives a {\sc PTime} algorithm for deciding \HDness, which avoids the determinisation used to solve the letter game directly. In the following sections we study how different token games can be used to decide $\HDness$ for different quantitative automata.

\begin{defi}[$k$-token games]\label{def:Gk}
	Consider a $\Val$ automaton $\A=(\Sigma,Q,\iota,\delta)$. 
	A configuration of the game $G_k(\A)$ for $k\geq 1$ is a tuple $(q, p_1,\dots p_k)\in Q^{k+1}$ of states, and the initial configuration is $\iota^{k+1}$.
	In a configuration $(q_i,p_{1,i},\ldots, p_{k,i})$, the game proceeds to the next configuration $(q_{i+1},p_{1,i+1},\ldots,p_{k,i+1})$ 
	as follows.  
	\begin{itemize}
		\item Adam picks a letter $\sigma_{i}$ from $\Sigma$,
		\item Eve picks a transition $\trans{q_{i}}{\sigma_{i}:\weight_{0,{i}}}{q_{i+1}}$, and
		\item Adam picks transitions, $\trans{p_{1,i}}{\sigma_{i}:\weight_{1,{i}}}{p_{1,i+1}}, \ldots, \trans{p_{k,i}}{\sigma_{i}:\weight_{k,{i}}}{p_{k,i+1}}$.
	\end{itemize}
	In the limit, a play consists of an infinite word $w$ that is derived from the concatenation of $\sigma_0,\sigma_1,\ldots$, as well as $k+1$ infinite sequences $\pi, \pi_1,\pi_2,\dots,\pi_k$ of transitions over $w$, where $\pi$ is picked by Eve and  $\pi_1,\dots,\pi_k$ by Adam. Eve wins the play if $\Val(\pi) \geq  \max(\Val(\pi_1),\dots, \Val(\pi_k))$.
	
	On finite words, $G_k$ is defined as above, except that the winning condition is a safety condition for Eve: for all finite prefixes of a play, it must be the case that the value of Eve's run is at least the value of each of Adam's runs.
\end{defi}

Cautious strategies (\cref{def:Cautious}) immediately extend to Eve's strategies in $G_k(\A)$. Unlike in the letter game, a winning strategy in $G_k(\A)$ must not necessarily be cautious: a non-cautious move does not necessarily allow Adam to win, since Adam might not have a token available to build an optimal run on the word witnessing Eve's lack of caution.

\section{Deciding History-Determinism via One-Token Games}\label{sec:G1}
Bagnol and Kuperberg showed that the one-token game $G_1$ does not suffice to characterise \HDness for B\"uchi automata~\cite[Lemma 8]{BK18}. However, it turns out that $G_1$ does characterise \HDness for all quantitative (and Boolean) automata on finite words and some quantitative automata on infinite words.

We can then use $G_1$ to decide history-determinism of some of these automata, over which the $G_1$ game is simple to decide. 
In particular, this is the case for $\Sup$ automata on finite words and $\Reach$, $\Safety$, $\Inf$ and $\DSum$ automata on finite and infinite words.

\subsection{$G_1$ Characterises \HDness for some automata}\label{sec:G1Characterizes}

\begin{thm}\label{cl:G1PresentFocused}
	Given a nondeterministic automaton $\A$ with a present-focused value function $\Val$ on finite or infinite words,
	Eve wins $G_1(\A)$ if and only if $\A$ is \HD. Furthermore, a winning strategy for Eve in $G_1(\A)$ induces an \HD strategy with the same memory.
\end{thm}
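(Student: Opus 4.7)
The plan is to prove the two implications separately; the only direction requiring real work is the one from winning in $G_1$ to $\HD$.

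For the easy direction, assume $\A$ is $\HD$ with a winning letter-game strategy $\strat$. In $G_1(\A)$ Eve plays $\strat$ while ignoring Adam's token. Then her run $\pi$ satisfies $\Val(\pi) \geq \A(w)$, and since $\A(w)$ is by definition the supremum of $\Val$ over all runs of $\A$ on $w$, it dominates the value of Adam's single run $\pi_1$. This settles the infinite-word case, and the finite-word case follows by applying the same argument prefix by prefix.

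For the other direction, fix a winning strategy $\strat$ for Eve in $G_1(\A)$. The central device is to focus on plays in which Adam \emph{mimics}, i.e., copies each of Eve's transitions as his own; on such plays Adam's run is literally equal to Eve's, so the winning condition is automatic. I claim that $\strat$ never makes a non-cautious move during any mimicking play. Suppose otherwise: in some mimicking history, Eve's (and Adam's) run is $\rho$ ending at a state $q$, and $\strat$ prescribes a non-cautious transition $t = \trans{q}{\sigma:\weight}{q'}$. By \cref{def:Cautious}, there exist a word $w$ and a run $\pi'$ from $q$ over $\sigma w$ with $\Val(\rho\pi') > \Val(\rho\pi)$ for \emph{every} $\pi$ starting with $t$. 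Consider the Adam strategy that mimics up to this point and then plays the letters of $\sigma w$ while advancing his token along $\pi'$; this is available because mimicking has kept Adam's token at $q$. Eve, following $\strat$, produces some continuation that starts with $t$ and runs over $\sigma w$; whatever it is, Adam's run-value strictly exceeds Eve's, contradicting that $\strat$ is winning. In the finite-word case, choose $w$ finite so that the strict inequality already occurs at a finite prefix, violating the safety winning condition of $G_1$.

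Define Eve's letter-game strategy $\strat'$ by simulating $\strat$ against a mimicking Adam: at each step, $\strat'$ outputs the transition $\strat$ would play in the unique $G_1$-play that has the same letter history and in which Adam has copied every transition of Eve so far. By the claim, $\strat'$ is cautious, and since $\Val$ is present-focused, \cref{def:PresentFocused} implies that $\strat'$ wins the letter game, so $\A$ is $\HD$. The memory statement is then immediate: along a mimicking play, Adam's transition at every step is determined by Eve's previous transition, so the memory state $\strat$ maintains is already a function of Eve's own run; implementing $\strat'$ simply reuses the underlying memory structure of $\strat$, feeding Adam's input component from Eve's own output. The only genuinely subtle point of the proof is this reduction to mimicking plays, which is what guarantees that Adam's token is positioned at $q$ and can therefore exploit any non-cautious move by Eve.
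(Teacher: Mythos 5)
Your proof is correct and follows essentially the same route as the paper's: your ``mimicking Adam'' is exactly the paper's family of copycat strategies, the punishment of a non-cautious move by diverting along the witnessing run $\pi'$ is the same key step, and the conclusion via present-focusedness and the memory-preservation argument are identical. No gaps.
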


\begin{proof}
	One direction is easy: if $\A$ is \HD, Eve can use her \HD strategy to win $G_1$ by ignoring Adam's token.
	For the other direction, assume that Eve wins $G_1$. 
	
	We consider the following family of \textit{copycat strategies} for Adam in $G_1$: a copycat strategy is one where Adam moves his token in the same way as Eve until she makes a non-cautious move $t=\trans{q}{\sigma:\weight}{q'}$ after building a run $\rho$; that is, there is some word $w$ and run $\pi'$ from $q$ on $\sigma w$, such that for every run $\pi$ on $\sigma w$ starting with $t$, we have $\Val(\rho \pi') > \Val(\rho \pi)$.
	Then the copycat strategy stops copying and directs Adam's token along the run $\pi'$ and plays the word $w$. If Eve plays a non-cautious move in $G_1$ against a copycat strategy, she loses.
	Then, if Eve wins $G_1$ with a strategy $s$, she wins in particular against all copycat strategies and therefore $s$ never makes a non-cautious move against such a strategy.
	
	Eve can then play in the letter game over $\A$ with a strategy $s'$ that moves her token as $s$ would in $G_1(\A)$ assuming Adam uses a copycat strategy. Then, $s'$ never makes a non-cautious move and is therefore a cautious strategy. Since $\Val$ is present-focused, any cautious strategy, and in particular $s'$, is winning in the letter game, so $\A$ is \HD. Note that $s'$ requires no more memory than $s$.
\end{proof}

Notice that the converse of the above does not hold, namely there are value functions $\Val$ that are not present-focused, while $G_1$ still characterises $\HD$ for all $\Val$-automata, as will be shown for $\Reach$ automata.

An immediate corollary of \cref{cl:G1PresentFocused} is that $G_1$ characterises history-determinism for all automata on finite words, as all value functions on finite words are present focused.

\begin{cor}\label{cl:FiniteWords}
	Given a nondeterministic automaton $\A$ on finite words,
	Eve wins $G_1(\A)$ if and only if $\A$ is \HD, and winning strategies in $G_1(\A)$ induce \HD strategies for $\A$ of the same memory size.
\end{cor}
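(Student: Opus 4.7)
The plan is to obtain this corollary as a direct composition of two results already in hand: \cref{cl:G1PresentFocused} and \cref{cl:finite-is-present-focused}, with essentially no further argument required. This is precisely the ``immediate corollary'' alluded to in the sentence preceding the statement.

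First, I would observe that an arbitrary nondeterministic automaton $\A$ on finite words is, by definition, a $\Val$ automaton for some value function $\Val \colon \Rat^* \to \Reals$. By \cref{cl:finite-is-present-focused}, every value function on finite sequences of rational values is present-focused, so $\Val$ is in particular present-focused. Hence the hypothesis of \cref{cl:G1PresentFocused} is satisfied by $\A$.

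Next, I would apply \cref{cl:G1PresentFocused} to $\A$. Its first clause gives the equivalence ``Eve wins $G_1(\A)$ iff $\A$ is \HD,'' and its ``furthermore'' clause gives that a winning strategy for Eve in $G_1(\A)$ induces an \HD strategy for $\A$ with the same memory. Both halves of the corollary follow at once.

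There is no real obstacle here, since the substantive work has already been carried out: the copycat-strategy construction in the proof of \cref{cl:G1PresentFocused} produces a cautious letter-game strategy $s'$ from the $G_1$-strategy $s$ by playing what $s$ would play against Adam's copycat moves, and this translation preserves memory verbatim. The only thing worth flagging is that the corollary makes no assumption on $\Val$ beyond being defined on finite words, which is exactly what \cref{cl:finite-is-present-focused} needs.
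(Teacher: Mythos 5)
Your proposal is correct and matches the paper's proof exactly: the paper also derives \cref{cl:FiniteWords} as a direct consequence of \cref{cl:finite-is-present-focused} and \cref{cl:G1PresentFocused}, with the memory claim inherited from the ``furthermore'' clause of the latter. No gaps.
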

\begin{proof}
	A direct consequence of \cref{cl:finite-is-present-focused,cl:G1PresentFocused}.
\end{proof}

\begin{cor}\label{cl:DSumG1}
	Given a nondeterministic $\DSum$ automaton $\A$ on finite or infinite words,
	Eve wins $G_1(\A)$ if and only if $\A$ is \HD, and winning strategies in $G_1(\A)$ induce \HD strategies for $\A$ of the same memory size.
\end{cor}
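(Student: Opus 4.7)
The plan is to observe that this corollary follows immediately from combining the two general results already established. The corollary splits into two cases, based on whether the automaton is on finite or infinite words, and both cases reduce to the same scheme: verify that the value function $\DSum$ is present-focused, then apply \cref{cl:G1PresentFocused}.

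For $\DSum$ on finite words, the value function is present-focused by \cref{cl:finite-is-present-focused}, since every value function on finite sequences of rational values is present-focused. For $\DSum$ on infinite words, the present-focusedness is precisely the content of \cref{cl:DSumPresentFocused}. In either case, \cref{cl:G1PresentFocused} directly yields both claims at once: Eve wins $G_1(\A)$ iff $\A$ is \HD, and the ``furthermore'' clause about the memory size of the induced strategy carries over verbatim, since \cref{cl:G1PresentFocused} explicitly asserts that the induced \HD strategy uses the same memory as the given $G_1$-winning strategy.

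There is essentially no obstacle here, since both ingredients have been isolated in a form tailored exactly to this deduction; the only thing to be careful about is to cite the correct present-focusedness result in each of the two cases (finite vs.\ infinite words) rather than conflating them. No additional argument about $\DSum$-specific features such as the discount factor or continuity of $\DSum$ is needed, as these were already absorbed into the statement of \cref{cl:DSumPresentFocused}. The proof therefore reduces to a one-line invocation of \cref{cl:G1PresentFocused}, applied with the present-focused value function being $\DSum$, and justified by \cref{cl:finite-is-present-focused} or \cref{cl:DSumPresentFocused} as appropriate.
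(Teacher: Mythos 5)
Your proposal is correct and matches the paper's own proof exactly: the paper also derives this corollary as a direct consequence of \cref{cl:finite-is-present-focused}, \cref{cl:DSumPresentFocused}, and \cref{cl:G1PresentFocused}, with the memory claim inherited from the ``furthermore'' clause of \cref{cl:G1PresentFocused}. Nothing further is needed.
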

\begin{proof}
	A direct consequence of \cref{cl:finite-is-present-focused,cl:DSumPresentFocused,cl:G1PresentFocused}.
\end{proof}

Considering the memory size of the \HD strategy, notice that for $\DSum$ automata on finite and infinite words, positional strategies suffice; that is these automata are \HD if and only if they are determinisable by pruning \cite[Theorem 23]{BL21}  and \cite[Section 5]{HPR16}.

\begin{lem}\label{cl:InfPresentFocused}
	The value function $\Inf$ on infinite sequences of rational values is present-focused.
\end{lem}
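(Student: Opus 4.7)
The plan is to verify the definition of present-focused directly on an arbitrary $\Inf$ automaton $\A$: for any cautious strategy of Eve in the letter game on $\A$ and any word $w$ that Adam plays against it, I will show that the resulting run $\pi$ satisfies $\Inf(\pi) \geq \A(w)$, so that every cautious strategy is winning.

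First I would fix $w$, set $v = \A(w)$, and note that because $\A$ has only finitely many transitions (and hence only finitely many distinct weights), the supremum defining $\A(w)$ is attained: some run $\pi^{*}$ on $w$ uses only weights $\geq v$. With this in hand I would introduce, for each $i \geq 0$, the set $P_i$ of states $q$ for which there exist both a run from $\iota$ on $w[0..i-1]$ ending in $q$ and a run from $q$ on $w[i..]$, each using only weights $\geq v$. The existence of $\pi^{*}$ already gives $\iota \in P_0$.

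The core of the argument is the inductive claim that, if Eve plays cautiously, then after her first $i$ moves the built prefix $\rho = \pi[0..i-1]$ has $\Inf(\rho) \geq v$ and the reached state $q_i$ lies in $P_i$. For the inductive step, since $q_i \in P_i$ there is a run $\pi'$ from $q_i$ on $w[i..]$ with $\Inf(\pi') \geq v$; combined with the inductive hypothesis, this yields $\Inf(\rho \pi') \geq v$. Applying cautiousness of Eve's $i$-th move $t_i$ to this alternative run provides some run $\pi''$ from $q_i$ on $w[i..]$ that begins with $t_i$ and satisfies $\Inf(\rho \pi'') \geq \Inf(\rho \pi') \geq v$; in particular $\Inf(\pi'') \geq v$, so every weight of $\pi''$ is $\geq v$. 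The first weight of $\pi''$ is exactly the weight of $t_i$, giving the weight bound on Eve's actual move, while the tail of $\pi''$ witnesses $q_{i+1} \in P_{i+1}$, closing the induction.

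Once the invariant is established, every weight of $\pi$ is $\geq v$, so $\Inf(\pi) \geq v = \A(w)$, as required. The main subtlety to watch is that cautiousness guarantees only the \emph{existence} of some good continuation $\pi''$ starting with $t_i$, not that Eve's strategy actually plays it; fortunately the existence alone suffices, since the first transition and the next state of $\pi''$ are forced to coincide with $t_i$ and $q_{i+1}$, respectively. The other point worth flagging explicitly is the attainment of the supremum $\A(w)$, which relies on the automaton having only finitely many transitions and therefore finitely many candidate $\Inf$ values of runs on $w$.
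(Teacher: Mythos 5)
Your proof is correct and rests on the same key observation as the paper's: a cautious move taken from a configuration that still admits a continuation of value $\A(w)$ must itself have weight at least $\A(w)$ and lead to another such configuration. The paper packages this as a minimal-counterexample argument (taking the longest prefix of Eve's run that still admits an optimal continuation and showing the next move is non-cautious), whereas you run the same induction forward and, usefully, make explicit the attainment of the supremum via finiteness of the weight set, which the paper leaves implicit.
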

\begin{proof}
	Consider a cautious strategy $s$ of Eve in the letter game on an $\Inf$ automaton $\A$, and assume toward contradiction that there exists a play in which Adam wins playing against $s$. 
	
	Let $w$ be the word generated along this play. Then, the run of $\A$ on $w$ that Eve generated along the play has some value $x<\A(w)$. Let $u$ be the shortest prefix of $w$, after which Eve chose a transition $t=\trans{q}{\sigma:x}{q'}$ with value $x$, and let $\rho$ be the corresponding prefix of the run generated by Eve.
	Clearly, every continuation of $\rho$ on the suffix of $w$ from $u$ will generate a run whose value is at most $x$, thus strictly smaller than $\A(w)$.
	
	Let $\rho'$ be the longest prefix of $\rho$, for which there is a continuation on the corresponding suffix $v$ of $w$, generating a run with value $\A(w)$. Notice that such a run prefix $\rho'$ exists, since it is bounded by above by $\rho$ and below by the empty run, whose continuation on the suffix $v$ of $w$ is an arbitrary run on $w$.
	
	Then, the move $t_0=q_0\xrightarrow{\sigma_0:\weight_0}q_0'$ of Eve, played after $\rho'$ is a non-cautious transition: for the suffix $v$ of $w$, there is a run $\pi'$ from $q$ over $\sigma v$ such that $\Inf(\rho'\pi')$ is strictly greater than the value of $\Inf(\rho'\pi)$ for any $\pi$ starting with $t_0$.
	Thus, we reached a contradiction to the cautiousness of $s$.
\end{proof}

\begin{thm}\label{cl:InfIsG1}
	Given a nondeterministic $\Inf$ (or $\Safety$) automaton $\A$ on infinite words,
	Eve wins $G_1(\A)$ if and only if $\A$ is \HD, and winning strategies in $G_1(\A)$ induce \HD strategies for $\A$ of the same memory size.
\end{thm}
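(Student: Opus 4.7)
The plan is to invoke the two machinery lemmas already in place, namely \cref{cl:G1PresentFocused} and \cref{cl:InfPresentFocused}, and show that the theorem is an immediate corollary for $\Inf$ automata; the $\Safety$ case will then follow because $\Safety$ automata are exactly $\Inf$ automata whose weights lie in $\{0,1\}$ and whose rejecting transitions lead to a sink, as spelled out in the preliminaries.

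First, for an $\Inf$ automaton $\A$ on infinite words, \cref{cl:InfPresentFocused} tells us that the $\Inf$ value function is present-focused. Applying \cref{cl:G1PresentFocused} to $\A$ then yields directly that Eve wins $G_1(\A)$ if and only if $\A$ is \HD, and that any winning strategy of Eve in $G_1(\A)$ induces an \HD strategy for $\A$ using no additional memory. This is the whole content of the theorem in the $\Inf$ case, so no further work is needed.

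For $\Safety$ automata, I would use the observation made in the preliminaries that a $\Safety$ automaton is the $\{0,1\}$-restriction of an $\Inf$ automaton (equivalently, the value of a run is the $\Inf$ of the weights it encounters, with the added structural convention that weight $0$ leads to a sink with $0$-self-loops). Thus any $\Safety$ automaton is in particular an $\Inf$ automaton, and the statement for $\Inf$ applies verbatim. In addition, one can note that the target-sink structure of $\Safety$ automata does not interfere: cautious moves in the $\Safety$ sense are exactly cautious moves in the $\Inf$ sense, because any transition to the rejecting sink is non-cautious whenever some continuation has value $1$.

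I do not foresee a genuine obstacle here, since both the characterisation (\cref{cl:G1PresentFocused}) and the present-focusedness of $\Inf$ (\cref{cl:InfPresentFocused}) have already been proven. The only minor point to handle carefully is the memory claim: \cref{cl:G1PresentFocused} guarantees that the letter-game strategy $s'$ it constructs from a $G_1$-strategy $s$ uses no more memory than $s$, which gives exactly the ``same memory'' part of the statement.
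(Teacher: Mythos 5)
Your proof is correct and takes exactly the same route as the paper, whose entire argument is that the theorem is a direct consequence of \cref{cl:G1PresentFocused,cl:InfPresentFocused}. Your additional remark that the $\Safety$ case reduces to $\Inf$ via the $\{0,1\}$-weight identification matches the convention established in the preliminaries, which the paper relies on implicitly.
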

\begin{proof}
	A direct consequence of \cref{cl:G1PresentFocused,cl:InfPresentFocused}.
\end{proof}

We move to $\Reach$ automata on infinite words. 

Observe that the $\Reach$ value function with respect to infinite words is not present-focused (see the automaton $\A$ in \cref{fig:Reachability}), causing also a difference between history-determinism of a $\Reach$ automaton when considered with respect to finite and infinite words (see the automaton $\B$ in \cref{fig:Reachability}). 

\begin{figure}[h]
	\centering
	\begin{tikzpicture}[->,>=stealth',shorten >=1pt,auto,node distance=2cm, semithick, initial text=, every initial by arrow/.style={|->},state/.style={circle, draw, minimum size=0.5cm}]
		
		\node  (A) {$\A$};
		\node[right of = A,below of =A, initial left, state,xshift=-1.0cm,yshift=1cm] (s0) {$s_0$};
		\node[state] (s1) [ right of=s0] {$s_1$};
		
		\path 
		(s0) edge	[loop above, out=120, in=70,looseness=5] node [right,xshift=.1cm,yshift=0.1cm]{$\Sigma$} (s0)
		(s0) edge node{$\Sigma$}  (s1)		
		(s1) edge[double]	[loop above, out=120, in=70,looseness=5] node [right,xshift=.1cm,yshift=0.1cm]{$\Sigma$} (s1)
		;
		
		\node  (B) [ right of=A,xshift=3cm, yshift=0.5cm]{$\B$};
		\node[right of = B,below of =B, initial above, state,xshift=.75cm,yshift=1.5cm] (s0) {$s_0$};
		\node[state] (s1) [ left of=s0] {$s_1$};
		\node[state] (s2) [ right of=s0] {$s_2$};
		\node[state] (s3) [below of=s0, yshift=0.25cm] {$s_3$};
		
		\path 
		(s0) edge node [above]{$\Sigma$}   (s1)		
		(s0) edge node{$\Sigma$}  (s2)		
		
		(s1) edge[double]  node[above] {$a$} (s3)
		(s2) edge[double]  node[above] {$b$} (s3)
		
		(s1) edge  [out=-90, in=-180,looseness=.75] node[right,xshift=-0.15cm,yshift=0.15cm] {$b$}(s3)
		(s2) edge  [out=-90, in=0,looseness=.75] node[left,xshift=0.15cm,yshift=0.15cm] {$a$}(s3)
		
		(s3) edge[double]	[loop above, out=120, in=70,looseness=5] node [right,xshift=.05cm,yshift=0.2cm]{$\Sigma$} (s3)
		;

	\end{tikzpicture}
	\caption{An  automaton $\A$, demonstrating that the $\Reach$ value function on infinite words is not present-focused: the strategy of Eve that remains forever in $s_0$ is cautious, but does not win the letter game on $\A$.
		The $\Reach$ automaton $\B$ demonstrates another difference between reachability on finite and infinite words: It is \HD on infinite words, but not \HD on finite words.}
	\label{fig:Reachability}
\end{figure}
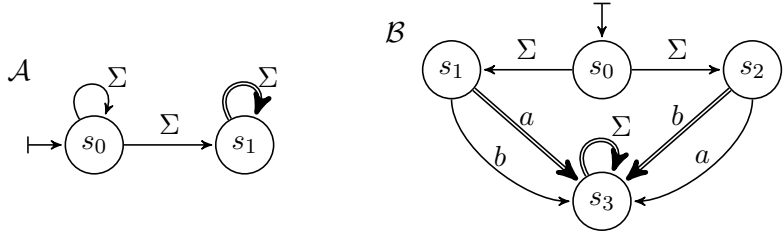

Nevertheless, there is a close connection between reachability with respect to finite and infinite words, as shown below, allowing us to show that 1-token games do characterise history-determinism also for $\Reach$ automata on infinite words.

Recall that we assume that all accepting transitions in a $\Reach$ lead to a sink state that has an accepting self loop on all the alphabet letters -- once an accepting transition has been reached, nothing else matters. Hence, $\Reach$ automata with acceptance on transitions and on states are very similar; the only difference is whether this ``heaven'' state is marked accepting or the transition leading to it is.
For simplicity, we will consider automata with acceptance on states in the proof that $G_1$ characterises history-determinism in $\Reach$ automata. 

Define a state $q$ of a $\Reach$ automaton $\A$ ``almost accepting'' if there exists a strategy $s$ in the letter game on $\A^q$ on infinite words, such that for every infinite word $w$, the run that $s$ entails on $w$ is accepting. (Notice that every accepting state is also almost accepting.)
Given a $\Reach$ automaton $\A$, we define $\Polish(\A)$ to be the $\Reach$ automaton that is derived from $\A$, by making every almost accepting state of $\A$ accepting. 

\begin{prop}\label{cl:PolishingComplexity}
	Given a $\Reach$ automaton $\A$ over an alphabet $\Sigma$ with states $Q$ and transitions $\delta$, computing $\Polish(\A)$ is in $O(|\delta|\cdot |\Sigma|)$, and the corresponding strategies that witness almost-acceptance are positional with respect to the product of $Q$ and $\Sigma$.
\end{prop}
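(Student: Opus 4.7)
The plan is to identify the almost-accepting states as Eve's winning region in a reachability game on a product arena, and to compute it by standard attractor techniques. Construct a two-player arena $G_\A$ with Adam positions $Q$ and Eve positions $Q\times\Sigma$: from $q\in Q$, Adam picks some $\sigma\in\Sigma$ and moves to $(q,\sigma)$; from $(q,\sigma)$, Eve picks a transition $\trans{q}{\sigma:\weight}{q'}$ of $\A$ and moves to $q'$. Eve's objective is to reach the target state of $\A$.

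I would first argue that a state $q$ is almost-accepting if and only if Eve wins $G_\A$ from $q$. Since the target is a sink and $\A$ is a $\Reach$ automaton, acceptance of an infinite run coincides with reaching the target; a letter-game strategy on $\A^q$ that accepts every infinite word is therefore precisely a strategy that, against any letter-stream by Adam, drives the play in $G_\A$ into the target, and conversely any Eve-winning strategy in $G_\A$ is a valid resolver in the letter game on $\A^q$ against all words.

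For the complexity bound, the arena $G_\A$ has $|Q|+|Q|\cdot|\Sigma|$ positions and $O(|Q|\cdot|\Sigma|+|\delta|)$ edges, all in $O(|\delta|\cdot|\Sigma|)$. A standard backwards-attractor computation determines Eve's winning region in time linear in the arena, and $\Polish(\A)$ is then obtained by marking every state of that region as accepting.

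Positionality of the witnessing strategies follows immediately from positional determinacy of reachability games: at each Eve position $(q,\sigma)$ in the winning region, one fixes a single transition into a successor of strictly smaller attractor rank, and this choice depends only on the pair $(q,\sigma)\in Q\times\Sigma$, which is exactly positionality with respect to $Q\times\Sigma$ in the sense of the statement. I do not foresee any substantive obstacle; the only point to verify is the textbook fact that the attractor-rank-decreasing strategy really forces any infinite play to reach the target after finitely many Eve moves, which is standard for reachability games.
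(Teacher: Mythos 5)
Your proposal is correct and follows essentially the same route as the paper: the paper likewise reduces the computation of the almost-accepting states to a reachability game on the product of $\Sigma$ and $\A$, solved by a linear-time attractor computation, with positionality of the witnessing strategies coming from positional determinacy of reachability games. Your write-up merely makes explicit the arena construction and the equivalence between almost-acceptance and winning the reachability game, which the paper leaves implicit.
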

\begin{proof}
	The ``almost acceptance game'', used to find the almost accepting states, is an adaptation of the letter game: Eve wins a play if her run on the word generated by Adam reaches an accepting state. Thus, it is a reachability game over an arena that is the product of the alphabet $\Sigma$ (for Adam's moves that choose the next letter) and $\A$ (for Eve's moves that choose her next transition). As computing the winning region of reachability games is linear in the number of the arena's transitions, and winning strategies in these games are positional, the claim directly follows.
\end{proof}

\begin{lem}\label{cl:Polished}
	Consider a $\Reach$ automaton $\A$. Then: i) If Eve wins $G_1(\A)$ on infinite words then Eve wins $G_1(\Polish(\A))$ on finite words; and ii) If $\Polish(\A)$ on finite words is \HD then $\A$ on infinite words is \HD.
\end{lem}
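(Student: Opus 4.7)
The plan for part (i) is to show that any winning strategy $s_E$ for Eve in $G_1(\A)$ on infinite words maintains the invariant that whenever Adam's token reaches an almost-accepting state of $\A$, Eve's token is also at an almost-accepting state. To establish this, I would fix an arbitrary history $h$ consistent with $s_E$ that ends with Adam's token at an almost-accepting state $p$ (and Eve's at some $q$), and consider the family of Adam strategies that extend $h$ by playing arbitrary letters while moving his own token according to the almost-acceptance witness $\tau_p$ from $p$. By definition of $\tau_p$, Adam's run becomes accepting on every letter continuation, so since $s_E$ wins $G_1(\A)$, Eve's resulting run must also be accepting. As Adam's continuation now makes his token moves a deterministic function of the letters, the restriction of $s_E$ to such Adam strategies is a function of the word alone, yielding a letter-game strategy from $q$ that accepts every infinite word --- witnessing that $q$ is almost-accepting. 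Given this invariant, Eve wins $G_1(\Polish(\A))$ on finite words by simply playing $s_E$: at every finite prefix, if Adam's token is accepting in $\Polish(\A)$ (equivalently, almost-accepting in $\A$), so is Eve's by the invariant, matching Adam's value $1$; otherwise Adam's value is $0$ and the safety condition is trivially met.

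For part (ii), given an \HD strategy $s_P$ for $\Polish(\A)$ on finite words, I would define Eve's strategy in the letter game on $\A$ over infinite words in two phases: follow $s_P$ until Eve's token first enters a state that is accepting in $\Polish(\A)$, and then switch to the almost-acceptance witness strategy from that state. For any infinite word $w$ with $\A(w) = 1$, some finite prefix $w[0..n]$ already admits a run of $\A$ reaching an accepting state, so $\A(w[0..n]) = 1$ as a finite-word value and hence $\Polish(\A)(w[0..n]) = 1$. By history-determinism of $s_P$, Eve's $s_P$-run reaches an accepting state of $\Polish(\A)$ at or before step $n$, triggering the switch to the witness strategy, which then guarantees Eve's run on $w$ eventually reaches an accepting state of $\A$. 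The case $\A(w) = 0$ is vacuous.

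The step I expect to be the main obstacle is the invariant in part (i), specifically justifying that the restriction of $s_E$ to a fixed Adam continuation strategy yields a legitimate letter-game strategy on $\A$ starting at $q$. This is not immediate because Eve's moves in $G_1$ may in principle depend on Adam's token moves, but the key observation is that once Adam commits to $\tau_p$ his token moves become a deterministic function of the letters he chooses, so the dependence of $s_E$ on those moves collapses into a dependence on the word alone, enabling the reduction to the letter game.
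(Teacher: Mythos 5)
Your proposal is correct and follows essentially the same route as the paper's proof: part (i) rests on the same key observation that fixing Adam's almost-acceptance witness from an almost-accepting state determinises his token as a function of the letters, so Eve's $G_1$ strategy collapses to a letter-game strategy witnessing that her own state is almost accepting (the paper phrases this invariant contrapositively, as a contradiction), and part (ii) uses the same two-phase strategy of following the finite-word \HD strategy of $\Polish(\A)$ until an almost-accepting state is reached and then switching to its witness. The only differences are presentational.
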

\begin{proof}
	Let $\A$ be the automaton on infinite words and $\A'$ stand for $\Polish(\A)$ on finite words.
	\begin{itemize}
		\item[i)] Let $s$ be a winning strategy for Eve in $G_1(\A)$. We define a strategy $s'$ for Eve in $G_1(\A')$ and show that it is winning: The strategy $s'$ follows $s$ until it reaches an almost accepting state. 
		
		If $s'$ eventually reaches an almost accepting state then Eve obviously wins, reaching an accepting state of $\A'$ with her strategy witnessing almost acceptance.
		
		Otherwise, we are guaranteed that Adam's run also never reaches an accepting state of $\A'$ (which is an almost accepting state of $\A$): Assume toward contradiction that Adam does reach such an almost accepting state $q_A$ when Eve is at a state $q_E$ that is not almost accepting. 
		Then, we claim that Adam can win $G_1(\A)$, by generating some infinite suffix $w$, over which he can reach an accepting state of $\A$, using the strategy $s_A$ that witnesses the almost acceptance of $q_A$, while Eve's strategy $s$ cannot. 
		
		Indeed, assume toward contradiction that for every word $w$, the strategy $s$ can reach an accepting state, using the knowledge that Adam starts in $q_A$ and follows $s_A$. Then, we can define a strategy $s_E$ for Eve in the letter game on $\A^{q_E}$ that also reaches an accepting state on every word $w$: It follows $s$, providing it in every step with the required knowledge about $s_A$, which is possible, since $s_A$ is fixed for the state $q_A$. This however implies that $q_E$ is almost accepting, leading to a contradiction.	
		
		\item[ii)]  Let $s'$ be Eve's winning strategy in the letter game on $\A'$. We define a strategy $s$ for Eve in the letter game on $\A$ and show that it is winning: In every play, the strategy $s$ starts just like $s'$, and continues according to the following two disjoint cases:
		\begin{itemize}
			\item If Adam generates a prefix $u$, such that $u\in L(\A')$ then, since $s'$ is winning in the letter game on $\A'$, Eve, following $s'$, is guaranteed to reach a state $q$ that is almost accepting in $\A$. 
			Then, after the prefix $u$, the strategy $s$ continues like the strategy that witnesses the almost acceptance of $q$, which guarantees to reach an accepting state in $\A$ for whatever infinite suffix that Adam generates, making Eve win.
			\item Otherwise, $s$ continues forever like $s'$: the word that Adam generates is rejecting and Eve wins.\qedhere
		\end{itemize}
		
	\end{itemize} 
\end{proof}

\begin{thm}\label{cl:Reachability}
	Given a nondeterministic $\Reach$ automaton $\A$ on infinite words,
	Eve wins $G_1(\A)$ if and only if $\A$ is \HD, and winning strategies in $G_1(\A)$ induce \HD strategies for $\A$ of the same memory size.
\end{thm}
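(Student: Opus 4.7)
My plan is to compose the two directions of \cref{cl:Polished} with the finite-word characterisation \cref{cl:FiniteWords}, since all three pieces are now in place. The forward direction ``\HD $\Rightarrow$ Eve wins $G_1(\A)$'' is the easy one: an \HD strategy in the letter game on $\A$ is immediately a winning strategy in $G_1(\A)$, since Eve can just ignore Adam's single token.

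For the converse, I would argue as follows. Suppose Eve wins $G_1(\A)$. By \cref{cl:Polished}(i), Eve wins $G_1(\Polish(\A))$ on finite words. Since $\Polish(\A)$ is an automaton on finite words, \cref{cl:FiniteWords} applies and tells us that $\Polish(\A)$ is \HD. Then \cref{cl:Polished}(ii) transports this \HDness back to $\A$ on infinite words, giving us the desired conclusion.

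For the memory part, I would track a winning strategy $s$ of Eve in $G_1(\A)$ through the chain of implications. \cref{cl:Polished}(i) produces a $G_1(\Polish(\A))$-strategy that ``follows $s$'' until reaching an almost-accepting state and then switches to a positional witness of almost-acceptance (positional by \cref{cl:PolishingComplexity}), so its memory is no larger than that of $s$. \cref{cl:FiniteWords} then yields an \HD strategy for $\Polish(\A)$ of the same memory. Finally, the construction in \cref{cl:Polished}(ii) again follows the finite-word \HD strategy until a state that became accepting in $\Polish(\A)$ is reached, at which point it switches to the positional almost-acceptance witness. Since in both switches the additional strategy is positional, the total memory needed is at most that of $s$ (up to a component positional in $Q \times \Sigma$ which is dominated by the automaton's state space and thus incurs no blow-up beyond what the letter-game strategy already implicitly carries).

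The only mild subtlety I anticipate is making the memory bookkeeping precise: one must verify that the ``switch'' points in parts (i) and (ii) of \cref{cl:Polished} can be detected using only the current state information (and not additional history), so that appending the positional almost-acceptance witness does not secretly grow the memory. This is immediate because almost-acceptance is a property of the state alone, pre-computed once and for all by \cref{cl:PolishingComplexity}. Other than that, the proof is a one-line composition of the earlier results.

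\begin{proof}
If $\A$ is \HD, any \HD strategy wins $G_1(\A)$ as Eve can disregard Adam's token. Conversely, if Eve wins $G_1(\A)$ with a strategy $s$, then by \cref{cl:Polished}(i) Eve wins $G_1(\Polish(\A))$ on finite words with a strategy of the same memory as $s$ (augmented by the positional almost-acceptance witness of \cref{cl:PolishingComplexity}). By \cref{cl:FiniteWords}, $\Polish(\A)$ is \HD on finite words via a strategy of the same memory. Applying \cref{cl:Polished}(ii) yields an \HD strategy for $\A$ on infinite words, again of the same memory, as the additional component used at the switch point is positional.
\end{proof}
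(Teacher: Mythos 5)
Your proposal is correct and follows exactly the paper's own argument: the easy direction by ignoring Adam's token, the converse by chaining \cref{cl:Polished}(i), \cref{cl:FiniteWords}, and \cref{cl:Polished}(ii), and the memory bound by noting that the only additional component is the positional almost-acceptance witness from \cref{cl:PolishingComplexity}. No divergence from the paper's proof worth noting.
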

\begin{proof}
	If  $\A$ is \HD then Eve obviously wins $G_1(\A)$, by using the same strategy as in the letter game, ignoring Adam's token.
	
	If  Eve wins $G_1(\A)$ then by \cref{cl:Polished}.i she also wins $G_1(\Polish(\A))$ on finite words, thus by \cref{cl:FiniteWords} $\Polish(\A)$ on finite words is \HD, implying by \cref{cl:Polished}.ii that $\A$ is \HD.
	
	Regarding the memory of the \HD strategy, observe that Eve's strategy $s$ in $G_1(\Polish(\A))$ is the same as her strategy in $G_1(\A)$ (see the proof of \cref{cl:Polished}); her strategy $s'$ in the letter game on $\Polish(\A)$ needs memory of the same size as $s$ (\cref{cl:FiniteWords}), and her strategy in the letter game on $\A$ either follows $s'$ or diverts to a strategy that witnesses almost-acceptance, which is positional in the arena of the letter game (\cref{cl:PolishingComplexity}).
\end{proof}

Considering the memory size of the \HD strategy, notice that for reachability (and safety) automata, positional strategies suffice; that is these automata are \HD if and only if they are determinisable by pruning \cite[Theorem 17]{BKS17}.

So far, we have shown that the $1$-token game characterises history-determinism for various quantitative automata, and in particular for $\Reach$ and $\Inf$ on infinite words. 
The $\Sup$ value function can be seen both as a generalisation of $\Reach$ to more than two weights, and as a dual of $\Inf$. However, it turns out that $\Sup$ automata behave rather differently, as demonstrated in~\cref{fig:SupNotG1}.

\begin{prop}\label{cl:SupNotG1}
	$G_1$ does not characterise history-determinism for $\Sup$ automata on infinite words.
\end{prop}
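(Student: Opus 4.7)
The plan is to exhibit a specific $\Sup$ automaton $\A$ (namely the one displayed in \cref{fig:SupNotG1}) for which Eve wins $G_1(\A)$ while $\A$ itself fails to be history-deterministic. Establishing the proposition then reduces to verifying these two claims about $\A$.

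For the non-\HDness direction, I plan to describe a winning strategy for Adam in the letter game on $\A$. The automaton contains an early nondeterministic choice which forces Eve to commit to one of two ``branches''; for each possible commitment of Eve's, Adam can extend the word with a suffix such that the optimal run traverses the \emph{other} branch (yielding a high $\Sup$ value), while Eve's chosen branch produces only a strictly smaller supremum. Since Eve's commitment is effectively irrevocable with respect to reaching the word's value, a short case-analysis over Eve's first move shows that no letter-game strategy for Eve is winning.

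For the $G_1$ direction, I plan to exhibit an explicit strategy for Eve that exploits her view of Adam's token. The key point is that $\Sup$ on infinite words only demands reaching the target weight \emph{at some finite position}, so Eve can afford to postpone any risky commitment and use Adam's moves as a hint. Concretely, Eve tracks Adam's token with a one-step lag; as long as Adam's token has not yet committed, Eve herself remains uncommitted, and once Adam's token enters a branch in order to start accumulating weight, Eve follows into the same branch on the next round. The crucial feature of $\A$ is that the letters Adam must play to let his token gather weight are precisely the letters that let Eve's lagging token gather the same weight a round later, so Eve's run eventually attains a $\Sup$ value at least as large as Adam's.

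The main obstacle will be making this delayed-mimicking argument airtight. Because Eve picks her transition \emph{before} Adam in each round of $G_1$, her token is always one step behind, and I must carefully verify that for every weight $w$ Adam's run attains, Eve's run attains a weight $\geq w$ at some later step, against \emph{every} Adam strategy — including those in which Adam commits once, grabs a high weight, and then plays letters purely to block Eve. This hinges on designing $\A$ so that from the ``waiting'' part of the automaton Eve can always join the branch Adam committed to on the appropriate letter, and the verification reduces to a finite case-analysis on the configuration right after Adam's first committing move.
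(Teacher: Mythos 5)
Your proposal is correct and matches the paper's own proof: the paper uses exactly the automaton of \cref{fig:SupNotG1}, shows it is not \HD by having Adam play $a$ while Eve waits in $s_0$ and switch to $b$ forever once she commits to $s_1$, and shows Eve wins $G_1$ by the same one-step-delayed mimicking of Adam's token that you describe. The "finite case-analysis after Adam's first committing move" you anticipate is precisely the paper's observation that if the committing letter is $b$ Eve immediately collects the maximal weight $3$, and otherwise both runs coincide in value because all subsequent weights in $s_1$ dominate the single weight Adam gained during Eve's lag.
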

\begin{proof}
	The automaton $\A$ depicted in \cref{fig:SupNotG1} is not \HD, while Eve wins $G_1$ on it.
\end{proof}

\begin{figure}[h]
	\centering
	\begin{tikzpicture}[->,>=stealth',shorten >=1pt,auto,node distance=2cm, semithick, initial text=, every initial by arrow/.style={|->},state/.style={circle, draw, minimum size=0.5cm}]
		
		\node  (A) {$\A$};
		\node[right of = A,below of =A, initial left, state,xshift=-0.75cm,yshift=1cm] (s0) {$s_0$};
		\node[state] (s1) [ right of=s0] {$s_1$};
		
		\path 
		(s0) edge	[loop above, out=120, in=70,looseness=5] 
		node [left,xshift=-.0cm,yshift=0.1cm]{$\LW{a}{0}$} 
		node [right,xshift=.1cm,yshift=0.1cm]{$\LW{b}{3}$} 
		(s0)
		(s0) edge node[above]{$\LW{a}{0}$} node[below]{$\LW{b}{3}$} (s1)		
		(s1) edge	[loop above, out=120, in=70,looseness=5] 
		node [left,xshift=-.0cm,yshift=0.1cm]{$\LW{a}{1}$} 
		node [right,xshift=.1cm,yshift=0.1cm]{$\LW{b}{2}$} 
		(s1)
		;

	\end{tikzpicture}
	\caption{A $\Sup$  automaton $\A$, demonstrating that $G_1$ does not characterise history-determinism for $\Sup$ automata on infinite words: 
		$\A$ is not \HD as Adam can play $a$ when Eve's run is in $s_0$ and $b$ when Eve's run is in $s_1$. If Eve stays in $s_0$, then the word has value 1 and Eve's run has value 0; if Eve goes to $s_1$, then the word has value 3 but Eve's run has value 2.
		Eve wins $G_1$ by moving to $s_1$ once Adam's token is in $s_1$. If Adam stays in $s_0$, they have the same run; if Adam moves and plays $b$ before Eve moves, she gets value 3 and wins; if he doesn't, then Eve gets the same value as Adam.
	}
	\label{fig:SupNotG1}
\end{figure}
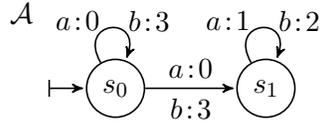

\subsection{Solving $G_1$ and Deciding \HDness}\label{sec:solveG1}
\hfill\\[-6pt]

\noindent
We continue with solving the \HDness problem of some of the automata types discussed in \cref{sec:G1Characterizes}, for which it is easy to solve the $G_1$ game.

We start with $\Reach$ and $\Safety$ automata on finite words, for which solving $G_1$ reduces to solving a safety game.

\begin{thm}\label{cl:solveG1-reach-safe}
	Deciding whether a $\Reach$ or $\Safety$ automaton on finite words is \HD can be done in time cubic in the size of $\A$.
\end{thm}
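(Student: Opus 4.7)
The plan is to invoke \cref{cl:FiniteWords}, by which $\A$ is \HD if and only if Eve wins $G_1(\A)$, and then to solve $G_1(\A)$ directly as a safety game on a cubic-size arena.

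First I would construct the explicit arena of $G_1(\A)$: its positions track the pair $(q_E, q_A)$ of current states of Eve's and Adam's tokens, together with auxiliary positions recording the current letter and whose turn it is to move. This arena has $O(|Q|^2 \cdot |\Sigma|)$ positions, and its number of edges is bounded by $O(|Q|^2 \cdot |\Sigma| + |Q| \cdot |\delta|)$, both of which are cubic in $|\A|$.

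Next I would reformulate the $G_1$ winning condition as a safety objective on this arena. For a $\Safety$ automaton, the value of a finite prefix of a run is $1$ as long as no weight-$0$ transition has been taken and $0$ from then on; hence ``$\Val(\pi) \geq \Val(\pi_1)$ on every prefix'' is equivalent to saying that Eve never enters her rejecting sink while Adam's run is still in the safe region---a safety condition on the product. For a $\Reach$ automaton, the value of a run is $0$ until an accepting transition is taken and $1$ afterwards (because accepting transitions lead to the target sink); the $G_1$ condition then amounts to requiring that Adam never reaches the target strictly before Eve---again a safety condition on the product. In both cases the ``bad'' configurations for Eve form a syntactically recognisable subset of the arena, so $G_1(\A)$ reduces to a plain safety game of cubic size.

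Finally, a safety game on an arena with $m$ edges can be solved in $O(m)$ time via the standard backwards attractor computation for Adam's winning region; applied to our arena this yields the claimed cubic bound in $|\A|$. I do not anticipate significant obstacles: the only point requiring some care is verifying that the sink structure of $\Reach$ and $\Safety$ automata really does make the ``for every future prefix'' clause of the $G_1$ winning condition on finite words collapse into a single ``avoid a bad set'' condition on the product, so that no explicit bookkeeping of past history is needed and the arena size stays cubic.
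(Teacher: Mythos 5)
Your proposal is correct and follows essentially the same route as the paper: reduce to $G_1(\A)$ via \cref{cl:FiniteWords}, encode it as a safety game on the cubic product arena (two copies of $\A$, the last letter, and a turn variable), observe that the sink structure of $\Reach$ and $\Safety$ automata collapses the prefix-wise comparison into avoiding the bad configurations you describe, and solve by a linear-time attractor computation. No gaps.
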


\begin{proof}
	By \cref{cl:FiniteWords}, it is enough to solve $G_1(\A)$.
	
	Given a $\Safety$ or $\Reach$ automaton $\A=(\Sigma,Q,\iota,\delta)$ on finite words,  $G_1(\A)$ reduces to a safety game, whose positions $(\sigma,q,q', t)\in \Sigma\cup\{\varepsilon\}\times Q^2\times \{L, E, A\}$ consist of a possibly empty letter $\sigma$ representing the last letter played, a pair of states $(q,q')$, one for Eve and one for Adam, which keep track of the end of the current run built by each player, and a turn variable $t\in \{L,E,A\}$ indicating whether it is Adam's turn to give a letter ($L$), Eve's turn to choose a transition ($E$), or Adam's turn to choose a transition ($A$). The initial position is $(\epsilon, \iota,\iota,L)$. The moves and position ownership encode the permitted moves in $G_1(\A)$.
	
	In both cases $G_1(\A)$ is a safety game: in the $\Reach$ case, Eve wins if the play remains in positions where either neither player's token has reached the target, or she has reached the target; in the $\Safety$ cases, Eve wins if the play remains in positions where either her token is in the safe region, or Adam's token has left the safe region. In both cases, the game can be represented by a cubic-sized arena, which can then be solved in linear time by computing Adam's attractor to the non-safe region.
\end{proof}

While $G_1(\A)$ is a safety game if $\A$ operates on finite words, this is no longer the case for $\A$ on infinite words, as, for example in the $\Reach$ case, Eve could reach the target any time after Adam, turning a potentially losing play prefix into a winning one.
Nevertheless, solving $G_1$ is still simple, reducing it to solving weak games.

\begin{thm}\label{cl:solveG1-reach-safe-infinite-words}
	Deciding whether a $\Reach$ or $\Safety$ automaton $\A$ on  infinite words is \HD can be done in time cubic  in the size of $\A$.
\end{thm}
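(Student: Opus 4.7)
The plan is to combine \cref{cl:InfIsG1} (for $\Safety$) and \cref{cl:Reachability} (for $\Reach$), which reduce \HDness-checking to solving $G_1(\A)$, with a cubic-sized safety-game reduction of $G_1(\A)$, in the spirit of the construction in the proof of \cref{cl:solveG1-reach-safe}. The complication compared to the finite-words case is that $G_1(\A)$ on infinite words is not itself a safety game: in the $\Reach$ case Eve's token can reach the target arbitrarily late, and in the $\Safety$ case the winner of an infinite play depends on whether Adam's token stays safe forever. The idea is to bypass this by precomputing, for each automaton type, a small amount of global information about $\A$ that lets us replace these limit behaviours by immediate terminal decisions in the arena.

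For the $\Reach$ case, I would first precompute the set $AA$ of almost-accepting states, via the procedure behind \cref{cl:PolishingComplexity}, in $O(|\delta|\cdot|\Sigma|)$ time. I would then build the same $\Sigma\cup\{\varepsilon\}\times Q^2\times\{L,E,A\}$ arena as in the proof of \cref{cl:solveG1-reach-safe}, but with the following terminal convention: Eve wins as soon as her token reaches the target; she also wins any play that forever avoids the target (both values are then $0$); and if Adam's token reaches the target while Eve's is at a state $q_E'$, the outcome is declared \emph{now}, with Eve winning iff $q_E'\in AA$. Since every cycle of the arena lies in the ``both tokens outside the target'' region and is won by Eve, the game reduces to a safety objective for Eve on a cubic-sized arena, which can be solved in linear time.

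For the $\Safety$ case, I would first precompute, in linear time, the set $S$ of states admitting an infinite safe path, by iteratively removing from the safe region those states with no safe successor still in the current set. I would then build the same arena, with the dual terminal convention: Eve wins as soon as Adam's token leaves the safe region; she also wins any play in which both tokens stay safe forever; and if Eve's token takes a $0$-weight transition, the outcome is declared immediately based on Adam's best response, namely Adam wins iff he has some safe transition to a state of $S$ from the current $q_A$ under the current letter, and Eve wins otherwise. This is again a safety game for Eve on a cubic-sized arena.

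The main step to verify will be the correctness of these ``short-circuit'' terminal decisions, i.e., that after a decisive event the remaining game's outcome is indeed captured by $AA$- or $S$-membership. For $AA$ this follows from its definition as the winning region of the single-player reachability letter game on $\A^{q_E'}$, so Eve can still reach the target regardless of Adam's subsequent word and token moves. For $S$ it follows from the fact that Adam, being the sole picker of both letter and of his own transition, can sustain an infinite safe path from $q_A'$ precisely when $q_A'\in S$. Once correctness is in place, both the preprocessing and the safety-game solving phases run in time cubic in $|\A|$, yielding the claimed bound.
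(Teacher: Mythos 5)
Your proposal is correct, but it follows a different route from the paper's main argument. The paper also first invokes \cref{cl:Reachability} and \cref{cl:InfIsG1} to reduce to $G_1(\A)$, but then encodes $G_1(\A)$ directly as a \emph{weak} game on the cubic product arena: a move gets priority $0$ if after it either Eve's token is in a good situation (in the safe region, resp.\ at the target) or Adam's is in a bad one (out of the safe region, resp.\ not at the target), and priority $1$ otherwise; since good and bad moves cannot alternate forever, the condition is weak and the game is solved in linear time with no preprocessing. You instead precompute global information about $\A$ --- the almost-accepting states $AA$ for $\Reach$ and the states $S$ with an infinite safe path for $\Safety$ --- and use it to short-circuit the limit behaviour into terminal decisions, turning $G_1(\A)$ into a genuine safety game. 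This is essentially the construction the paper only sketches as an ``alternative solution'' at the end of its proof (note that for $\Safety$, your set $S$ is exactly the set of states with non-empty language used there), and your short-circuit arguments are sound: once Adam's token sits on the absorbing target, the residual game is precisely the almost-acceptance letter game from $q_E'$, and once Eve's token has left the safe region, Adam --- controlling both letters and his own transitions --- can keep his run safe forever iff he has a safe transition into $S$. The trade-off is that your route needs these extra correctness lemmas (plus determinacy of the auxiliary reachability game to conclude that $q_E'\notin AA$ lets Adam win), whereas the weak-game encoding lets the limit winning condition absorb that reasoning; in exchange you get a plain safety game and, as a side benefit, a cleaner formulation of the $\Reach$ alternative than the paper's one-line reachability-game sketch, which taken literally would miscount plays where neither token ever reaches the target. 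Both approaches yield the claimed cubic bound.
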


\begin{proof}
	By \cref{cl:Reachability,cl:InfIsG1}, it is enough to solve $G_1(\A)$.
	
	We can encode $G_1(\A)$ as a weak game on an arena that consists, as in the proof of~\cref{cl:solveG1-reach-safe}, of the product of two copies of $\A$ to keep track of each player's token, the alphabet to indicate Adam's last choice of letter, and a variable to indicate whose turn it is to play. For the $\Safety$ case, a move is good for Eve, encoded with priority $0$, if after the move either Adam's token is out of the safe region, or her token is within the safe region. Other moves are bad for Eve, encoded with priority $1$. Eve wins plays that eventually remain within the good region.
	Similarly, in the $\Reach$ case, a move is good for Eve, encoded with priority $0$ if either her token has reached the target or Adam's token has not reached the target. In both cases, the winning condition is a weak condition since there are no plays alternating good and bad moves infinitely often.
	In both cases, the resulting game can then be solved in linear time~\cite{HMS06}.
	
	Note that alternatively, for a $\Safety$ automaton $\A$, one can encode $G_1(\A)$ as a safety game. Indeed, positions in which Eve's token has left the safe region while Adam's token is in a state with a non-empty language are winning for Adam, and can be marked unsafe for Eve, while positions in which Adam's token is in a state with empty language are winning for Eve, and can be marked safe for Eve.  Then Eve wins if and only if the play remains in safe positions, that is, either her token is in the safe region of $\A$ or Adam's token is in a state with empty language.
	
	Analogously, there is an alternative solution also for a $\Reach$ automaton $\A$, whereby $G_1(\A)$ is encoded as a reachability game, in which the target positions are the almost accepting ones (which, by \cref{cl:PolishingComplexity}, can be computed in time cubic in the size of $\A$): Positions in which Eve's token is in an almost accepting state are the ones she needs to reach.
\end{proof}

Solving $G_1$ for $\Sup$ automata reduces, as in some of the previous cases, to solving safety games.

\begin{thm}\label{cl:FiniteSupPtime}
	Deciding whether a $\Sup$ automaton on finite words is \HD is in \PTime, namely in $O(|\Sigma|n^2k)$, where $\Sigma$ is the automaton's alphabet, $k$ the number of weights and $n$ the number of  states.
\end{thm}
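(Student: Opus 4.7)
The plan is to invoke \cref{cl:FiniteWords}, which reduces \HDness of an automaton on finite words to the question of whether Eve wins $G_1(\A)$, and then to reduce $G_1(\A)$ to a safety game of polynomial size.

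First, I would observe that for a $\Sup$ automaton on finite words, $G_1(\A)$ is naturally a safety game. Since $\Sup$ of a run prefix only grows as the prefix is extended, a play of $G_1(\A)$ is winning for Eve iff at every step the maximum weight seen along her run is at least the maximum weight seen along Adam's run; once Adam's running maximum strictly exceeds Eve's, the play is irrecoverable.

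Next, I would encode $G_1(\A)$ as an explicit safety game whose positions are tuples $(q_E, q_A, m, \sigma, t)$, where $q_E, q_A \in Q$ track the endpoints of the two runs, $m$ is Eve's current running maximum weight (initially $-\infty$), $\sigma \in \Sigma \cup \{\epsilon\}$ records the current letter being processed, and $t \in \{L, E, A\}$ indicates whose turn it is (Adam picks a letter, Eve picks a transition, then Adam picks a transition). The key simplification is that we do not need to track Adam's running maximum: at every safe position reachable by play, we maintain the invariant that Adam's maximum is at most $m$, so when Adam picks a transition of weight $w_A$, comparing $w_A$ against $m$ is enough. If $w_A > m$ the play moves to a losing sink; otherwise the invariant is preserved into the next round, and the $m$-component simply updates to $\max(m, w_E)$ when Eve moves.

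With this encoding, the arena has $O(|\Sigma|\,n^2\, k)$ positions (factors coming from the two state components, the $k+1$ candidate values of $m$ drawn from the weight set plus $-\infty$, the letter, and a constant-size turn variable), and safety games are solved in time linear in the arena via backward attractor computation to the losing sink, yielding the claimed bound. The main point to verify is the correctness of the reduction --- in particular the invariant claim that tracking only Eve's running maximum suffices --- which follows directly from the monotonicity of $\Sup$ and the prefix-safety form of Eve's winning condition in $G_1$ on finite words.
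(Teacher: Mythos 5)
Your proposal is correct and follows essentially the same route as the paper: reduce to $G_1(\A)$ via \cref{cl:FiniteWords}, then encode $G_1(\A)$ as a safety game over the product of two copies of $\A$, the alphabet, a turn variable, and a single variable for Eve's running maximum, declaring Adam's move losing for Eve exactly when his chosen transition's weight exceeds that maximum. Your explicit invariant argument for why Adam's running maximum need not be tracked is a welcome elaboration of a point the paper leaves implicit, but it is the same construction and yields the same $O(|\Sigma|n^2k)$ bound.
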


\begin{proof}
	By \cref{cl:FiniteWords}, it is enough to solve $G_1$.
	
	Given a $\Sup$ automaton $\A=(\Sigma,Q,\iota,\delta)$ with weights $W$,  $G_1(\A)$ reduces to a safety game, by taking, as in the previous proofs, the product of two copies of $\A$, $\Sigma$, and a variable to keep track of whose turn it is to play. In addition, we use an additional variable $x_E$ to keep track of the maximal weight on Eve's run so far. 
	
	The winning condition for Eve is a safety condition: Adam wins if he picks a move with a weight higher than $x_E$, the maximal weight on Eve's run. Then plays in this game are in bijection with plays of $G_1(\A)$, and Eve wins if and only if she can avoid Adam choosing a transition with a larger weight than $x_E$, that is, if she can win $G_1(\A)$. 
	
	Then, solving $G_1(\A)$ reduces to solving this safety game, which can be done in time linear in the number of positions of the arena, which is $3|\Sigma|n^2k$.
\end{proof}

\NotNeeded{
	\begin{thm}\label{cl:InfPtime}
		Deciding whether an $\Inf$ automaton on finite words is \HD is in \PTime, namely in $O(|\Sigma|n^2k^2)$, where $\Sigma$ is the automaton's alphabet, $k$ the number of weights and $n$ the number of  states.
	\end{thm}
	\begin{proof}
		By \cref{cl:FiniteWords}, it is enough to solve $G_1$.
		
		Given an $\Inf$ automaton $\A$ on finite words, $G_1(\A)$ can be reduced to a safety game that is similar to the safety game detailed in the proof of \cref{cl:FiniteSupPtime}, except that instead of keeping Eve's maximal value along her run, we need to keep the minimal value along Adam's run in some variable $x_A$, and the safety condition for Eve is that her current value must always be at least as big as $x_A$ and Adam's next move. Since Adam plays after Eve in each round of the game, we also need to keep Eve's last value, thus having $3|\Sigma|n^2k^2$ positions.
	\end{proof}
}

Solving $G_1$ for $\Inf$ automata reduces to solving safety games, when the automata operate on finite words, and to solving weak games, when the automata operate on infinite words.

\begin{thm}\label{cl:InfPtimeInfiniteWords}
	Deciding whether an $\Inf$ automaton on finite or infinite words is \HD is in \PTime, namely in $O(|\Sigma|n^2k^2)$, where $\Sigma$ is the automaton's alphabet, $k$ the number of weights and $n$ the number of  states.
\end{thm}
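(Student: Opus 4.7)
By \cref{cl:FiniteWords} for the finite-word case, and by \cref{cl:InfIsG1} for the infinite-word case, it suffices to decide the winner of $G_1(\A)$ in time $O(|\Sigma|n^2k^2)$. The plan is to reduce $G_1(\A)$, in both cases, to a simple game on an arena of the stated size: a safety game for finite words, and a weak game for infinite words.

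First I would set up a common arena. As in the proof of \cref{cl:FiniteSupPtime}, positions encode the product of two copies of $\A$ (one state $q_E$ for Eve's token and one state $q_A$ for Adam's token), the last letter played, and a turn flag in $\{L,E,A\}$ indicating whether Adam is to pick a letter, Eve is to pick a transition, or Adam is to pick a transition. In addition, since the value function is $\Inf$, I would augment each position with two weight variables $x_E, x_A$ taken from the finite set $W$ of weights occurring in $\A$, tracking the minimum transition weight seen so far along Eve's run and along Adam's run, respectively. Both variables are non-increasing along any play. This yields $O(|\Sigma|n^2k^2)$ positions, since each variable ranges over at most $k$ values. Moves follow the rules of $G_1(\A)$, updating the relevant weight variable to the minimum of its previous value and the weight of the chosen transition.

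For finite words, Eve wins $G_1(\A)$ iff after every prefix, the $\Inf$ of her run so far is at least the $\Inf$ of Adam's run so far, which is exactly the invariant $x_E \geq x_A$. So I would mark as unsafe any position violating $x_E \geq x_A$, obtaining a safety game whose solution matches the winner of $G_1(\A)$ and which is solvable in linear time in the number of positions. For infinite words, Eve wins $G_1(\A)$ iff $\inf_i x_E^{(i)} \geq \inf_i x_A^{(i)}$; since $(x_E,x_A)$ is non-increasing and ranges over a finite set, it stabilises along any infinite play, and Eve wins iff after stabilisation one has $x_E \geq x_A$. I would assign priority $0$ to positions where $x_E \geq x_A$ and priority $1$ to the remaining positions. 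Because any cycle in the arena must have constant $(x_E, x_A)$, no cycle mixes priorities $0$ and $1$, so this is a weak game, solvable in time linear in its size by~\cite{HMS06}.

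The only point that requires a moment of care is justifying that the weak-game winning condition coincides with Eve's winning condition in $G_1(\A)$ on infinite words: since both $x_E$ and $x_A$ are monotone non-increasing over a finite set, eventually both freeze, and from that point on the priority seen is constant and equals the sign of $x_E - x_A$ at the limit, which is exactly the comparison of the two $\Inf$ values. Combining this with the characterisation of \HDness by $G_1$ established in \cref{cl:finite-is-present-focused,cl:G1PresentFocused,cl:InfIsG1}, checking \HDness of an $\Inf$ automaton reduces in both settings to solving a game of the claimed size in linear time, giving the announced $O(|\Sigma|n^2k^2)$ bound.
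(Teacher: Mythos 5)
Your proposal is correct and follows essentially the same route as the paper: reduce to $G_1$ via \cref{cl:FiniteWords} and \cref{cl:InfIsG1}, build the product arena with two monotone non-increasing weight variables $x_E,x_A$ tracking the minimum weight on each token's run, and solve the resulting safety game (finite words) or weak game (infinite words) in linear time. Your explicit justification that no cycle mixes priorities because $(x_E,x_A)$ is constant on any cycle is exactly the observation the paper relies on, just spelled out in slightly more detail.
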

\begin{proof}
	By \cref{cl:InfIsG1} and \cref{cl:FiniteWords}, it is enough to solve $G_1$ on the $\Inf$ automaton $\A$.
	
	Analogously to the previous proofs, we can encode $G_1(\A)$ as a product of two copies of $\A$ to keep track of each player's token, the alphabet to keep track of the last letter chosen by Adam, a variable to keep track of whose turn it is to play, and a pair of variables to remember the least value read so far by each player's token. (Notice that as opposed to the $\Sup$ case, proved in \cref{cl:FiniteSupPtime}, the encoding in this case needs to keep the least value read so far by both players, and cannot do with only one of them.)
	
	In the case of finite words, the winning condition for Eve is a safety condition: the least value seen so far by Eve's token must always be at least as high as the one seen by Adam's token.
	
	In the case of infinite words, the winning condition for Eve is that eventually the value seen by Adam's token must remain at least as high as the one seen by Eve's token. Since there are only finitely many possible values and the least value for each token can only decrease along the run, this is a weak condition.
	
	In both cases the resulting game, of size $O(|\Sigma|n^2k^2)$, can be solved in linear time~\cite{HMS06}.
\end{proof}

Next, we show that solving $G_1$ is in {\sc NP}$\cap${\sc co-NP} for $\DSum$ automata.

\begin{thm}\label{cl:DSum-NPcoNP}
	For every $\lambda\in\Rat\cap(0,1)$, deciding whether a $\lambda$-$\DSum$ automaton $\A$, on finite or infinite words, is \HD is in {\sc NP}$\cap${\sc co-NP}\footnote{It was already known for finite words~\cite{FLW20}. It is perhaps surprising for infinite words, given the NP-hardness result in~\cite[Theorem 6]{HPR16}. In consultation with the authors, we have confirmed that there is an error in the hardness proof.}. 
\end{thm}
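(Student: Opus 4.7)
The plan is to apply \cref{cl:DSumG1}, which reduces the \HDness problem to deciding the winner of $G_1(\A)$, and then to recast $G_1(\A)$ as a two-player turn-based discounted-payoff game, for which membership in NP$\cap$coNP is classical.

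First, I would construct an arena of size polynomial in $\A$ whose positions record the pair $(q_E,q_A)$ of Eve's and Adam's current token states, together with a turn indicator and, after Adam's letter-choice, the chosen letter $\sigma$ (and, after Eve's move, the weight $w_E$ she picked). One round of $G_1(\A)$ then corresponds to: Adam picks $\sigma$, Eve picks a transition $\trans{q_E}{\sigma:w_E}{q_E'}$, and Adam picks a transition $\trans{q_A}{\sigma:w_A}{q_A'}$. I annotate the final edge of each round with the rational weight $w_E - w_A$. Under this encoding, Eve's objective in $G_1(\A)$ for $\lambda$-$\DSum$ on infinite words, namely $\sum_{i\ge 0}\lambda^i w_{E,i} \ge \sum_{i\ge 0}\lambda^i w_{A,i}$, becomes exactly the $\lambda$-discounted-payoff condition $\sum_{i\ge 0}\lambda^i (w_{E,i}-w_{A,i}) \ge 0$.

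Next, I would invoke the classical theory of discounted-payoff games (going back to Shapley, with an NP$\cap$coNP upper bound due to Zwick and Paterson): both players admit optimal positional strategies, and the decision problem ``is the value at least $v$?'' lies in NP$\cap$coNP. The NP witness is a positional strategy for Eve; once it is fixed, the remaining one-player problem of computing Adam's best response reduces to solving a system of Bellman equations in polynomial time. The coNP side is symmetric, using a positional strategy for Adam achieving discounted payoff strictly less than $0$.

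The main technical wrinkle lies in the finite-word case, where the winning condition is a safety requirement: Eve's partial $\lambda$-$\DSum$ must dominate Adam's at \emph{every} prefix, rather than only in the limit. For this case I would rely on \cite[Theorem 23]{FLW20}, whose game is a close variant of $G_1(\A)$ and has already been shown to be in NP$\cap$coNP, and invoke \cref{cl:DSumG1} to transfer the result to $G_1(\A)$. Alternatively, one can directly extend the reduction above by observing that positional strategies for both players still suffice for this safety-plus-discount objective (the partial differences live in the bounded interval $[-W/(1-\lambda),W/(1-\lambda)]$, and whether a positional pair maintains the safety invariant can be checked in polynomial time by inspecting the finitely many reachable partial-sum states), yielding again an NP$\cap$coNP bound.
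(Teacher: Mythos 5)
Your overall route is the same as the paper's: reduce \HDness to $G_1(\A)$ via \cref{cl:DSumG1}, encode $G_1(\A)$ as a $0$-threshold discounted-payoff game on a product arena with edge weights $w_E - w_A$, and invoke the \NPcoNP bound for discounted games (your treatment of the finite-word case, via \cite{FLW20} or by letting Adam opt out to a zero sink, is also consistent with the paper, which handles it by the latter device).

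There is, however, a genuine gap in the central step. You claim that annotating ``the final edge of each round'' with $w_{E,i}-w_{A,i}$ makes Eve's objective ``exactly the $\lambda$-discounted-payoff condition $\sum_{i\ge 0}\lambda^i (w_{E,i}-w_{A,i}) \ge 0$.'' But in a standard discounted-payoff game the discount is applied per \emph{edge} of the arena, whereas your index $i$ ranges over \emph{rounds}, and each round of $G_1(\A)$ necessarily occupies three turn-based edges (Adam's letter, Eve's transition, Adam's transition --- the interleaving cannot be collapsed without changing who knows what when). The payoff of your game with discount $\lambda$ is therefore $\sum_i \lambda^{3i+2} d_i$, not $\sum_i \lambda^i d_i$. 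The obvious repair, taking discount factor $\lambda^{1/3}$, generally yields an irrational discount, which takes you outside the setting of the classical Zwick--Paterson result you cite. The paper spends the technical core of its proof on precisely this point: it passes to discounted-sum games with \emph{multiple} rational discount factors $\lambda',\lambda'',\lambda'''$ with $\lambda'\lambda''\lambda'''=\lambda$ (giving explicit choices), verifies that the resulting $(\lambda',\lambda'',\lambda''')$-$\DSum$ of the play equals $\lambda'\lambda''\sum_i\lambda^i d_i$, and then invokes the \NPcoNP bound for multi-discount games due to Andersson rather than the single-discount result. Your argument needs this (or an equivalent) compensation step to go through; as written, the identification of the arena's discounted payoff with the round-indexed sum is false.
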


\begin{proof}
	Consider a $\lambda$-DSum automaton $\A=(\Sigma,Q,\iota,\delta)$, where the weight of a transition $t$ is denoted by $\gamma(t)$.
	By \cref{cl:DSumG1}, it suffices to show that solving $G_1(\A)$ is {\sc NP}$\cap${\sc co-NP}. We achieve this by reducing solving $G_1(\A)$ to solving a discounted-sum threshold game, which Eve wins if the $\DSum$ of a play is non-negative. It is enough to consider infinite games, as they also encode finite games, by allowing Adam to move to a forever-zero-position in each of his turns.
	
	The reduction follows the same pattern as that in the proof of \cref{cl:FiniteSupPtime}: we represent the arena of the game $G_1(\A)$ as a finite arena, and encode its winning condition, which requires the difference between the $\DSum$ of two runs to be non-negative, as a threshold $\DSum$ winning condition. Note first that the difference between the $\lambda$-$\DSum$ of the two sequences $\weight_0 \weight_1...$ and $\weight'_0 \weight'_1...$ of weights is equal to the $\lambda$-$\DSum$ of the sequence of differences $d_0=(\weight_0-\weight'_0), d_1=(\weight_1-\weight'_1), \ldots$, as follows:
	$
	(\sum_{i=0}^\infty \lambda^i \weight_i ) - \sum_{i=0}^\infty \lambda^i \weight'_i  = \sum_{i=0}^\infty \lambda^i (\weight_i-\weight'_i) 
	$.
	
	We now describe the $\DSum$ arena $G$ in which Eve wins with a non-strict $0$-threshold objective if and only if she wins $G_1(\A)$. (See an example in \cref{fig:DsumExample}.)
	The arena has positions in $(m,\sigma,t,q,q')\in \{L,E,A\}\times \Sigma\cup \{\varepsilon\} \times \delta\cup \{\varepsilon\} \times Q^2$ where $m$ denotes the move type, having $L$ for Adam choosing a letter, $E$ for Eve choosing a transition and $A$ for Adam choosing a transition; $\sigma$ is the last played letter if $m=E$ or $m=A$ and $\varepsilon$ otherwise; $t$ is the transition just played by Eve if $m=A$ and $\varepsilon$ otherwise; and the states $q,q'$ represent the positions of Eve and Adam's tokens. 
	
	A move of Adam that chooses a transition $t'=\trans{q'}{\sigma:\weight}{q''}$, namely a move $(A,\sigma,t,q,q')\rightarrow(L,\varepsilon,\varepsilon,q,q'')$, is given weight $\gamma(t)-\gamma(t')$, that is, the difference between the weights of the transitions chosen by both players.
	Other transitions are given weight $0$. 
	Observe that we need to compensate for the fact that only one edge in three is weighted. One option to do it is to take a discount factor $\lambda'= \lambda^{\frac{1}{3}}$ for the $\DSum$ game $G$. Yet, $\lambda'$ can then be irrational, which somewhat complicates things. Another option is to consider discounted-sum games with multiple discount factors \cite{And06} and choose three rational discount factors $\lambda', \lambda'', \lambda''' \in \Rat \cap (0,1)$, such that $\lambda' \cdot \lambda'' \cdot \lambda'''= \lambda$. Since the first two weights in every triple are $0$, only the multiplication of the three discount factors toward the third weight is what matters. For $\lambda=\frac{p}{q}$, where $p<q$ are positive integers, one can choose $\lambda'=\frac{4p}{4p+1}, \lambda''=\frac{4p+1}{4p+2}$, and $\lambda'''=\frac{2p+1}{2q}$.
	
	Plays in $G_1(\A)$ and in $G$ are in bijection. It now suffices to argue that the winning condition of $G$, namely that the $(\lambda', \lambda'', \lambda''')$-$\DSum$ of the play is non-negative, correctly encodes the winning condition of $G_1(\A)$, meaning that the difference between the $\lambda$-$\DSum$ of Eve's run and of Adam's run is non-negative. 
	
	Let $d_0 d_1\ldots$ be the sequence of weight differences between the transitions played by both players in $G_1(\A)$, and let $\lambda_0, \lambda_1, \ldots$ and $w_0, w_1, \ldots$ be the corresponding sequences of discount factors and weights in the  $(\lambda', \lambda'', \lambda''')$-$\DSum$ game, respectively, where for every $i=(0 \mod 3)$, we have $w_i=0$ and $\lambda_i=\lambda'$, for every $i=(1 \mod 3)$, we have $w_i=0$ and $\lambda_i=\lambda''$, and for every $i=(2 \mod 3)$, we have $w_i=d_i$ and $\lambda_i=\lambda'''$. Then the value of the $(\lambda', \lambda'', \lambda''')$-$\DSum$ sequence is equal to the required $\DSum$ sequence multiplied by $\lambda' \cdot \lambda''$:
	
	\[
	(\lambda', \lambda'', \lambda''')\text{-}\DSum = 
	\sum_{i=0}^\infty (0 \cdot \prod_{j=0}^{3i-1} \lambda_j ~+ 0 \cdot \prod_{j=0}^{3i} \lambda_j ~+ w_{3i+2} \cdot \prod_{j=0}^{3i+1} \lambda_j)
	= \lambda' \cdot \lambda'' \cdot \sum_{i=0}^\infty \lambda^i d_i 
	\]
	
	Hence Eve wins the game $G_1(\A)$ if and only if she wins the $0$-threshold $(\lambda', \lambda'', \lambda''')$-$\DSum$ game over $G$.
	As $G$ has a state-space polynomial in the state-space of $\A$ and solving $\DSum$-games is in {\sc NP}$\cap${\sc coNP}~\cite{And06}, solving $G_1(\A)$, and therefore deciding whether $\A$ is \HD, is also in {\sc NP}$\cap${\sc coNP}.
\end{proof}

\newcommand{\NodeL}[3]{\node[rectangle,draw] (#1) [#2]{$\begin{array}{c}\!\!\! L, \epsilon, \epsilon\!\!\!\\ \!\!#3\!\!\end{array}$};}

\newcommand{\NodeE}[4]{\node[ellipse,draw,inner sep=0,outer sep=0] (#1) [#2]{$\begin{array}{c}\!\!\! E, #3, \epsilon\!\!\!\\ \!\!\! #4 \!\!\!\end{array}$};}

\newcommand{\NodeA}[5]{\node[rectangle,draw] (#1) [#2]{$\begin{array}{c}\!\!\!A, #3, #4\!\!\!\\ \!\!#5\!\!\end{array}$};}

\newcommand{\ArrowLE}[3]{(#1) edge node[#2] {$\lambda', 0$} (#3)}
\newcommand{\ArrowEA}[2]{(#1) edge node[yshift=-0.1cm] {$\lambda'', 0$} (#2)}
\newcommand{\ArrowAL}[4]{(#1) edge node [#2] {$\lambda''', #3$}(#4)}
\newcommand{\PreArrowAL}[4]{(#1) edge [->,out=0,in=0,looseness=0] node [#2] {$\lambda''', #3$}(#4)}

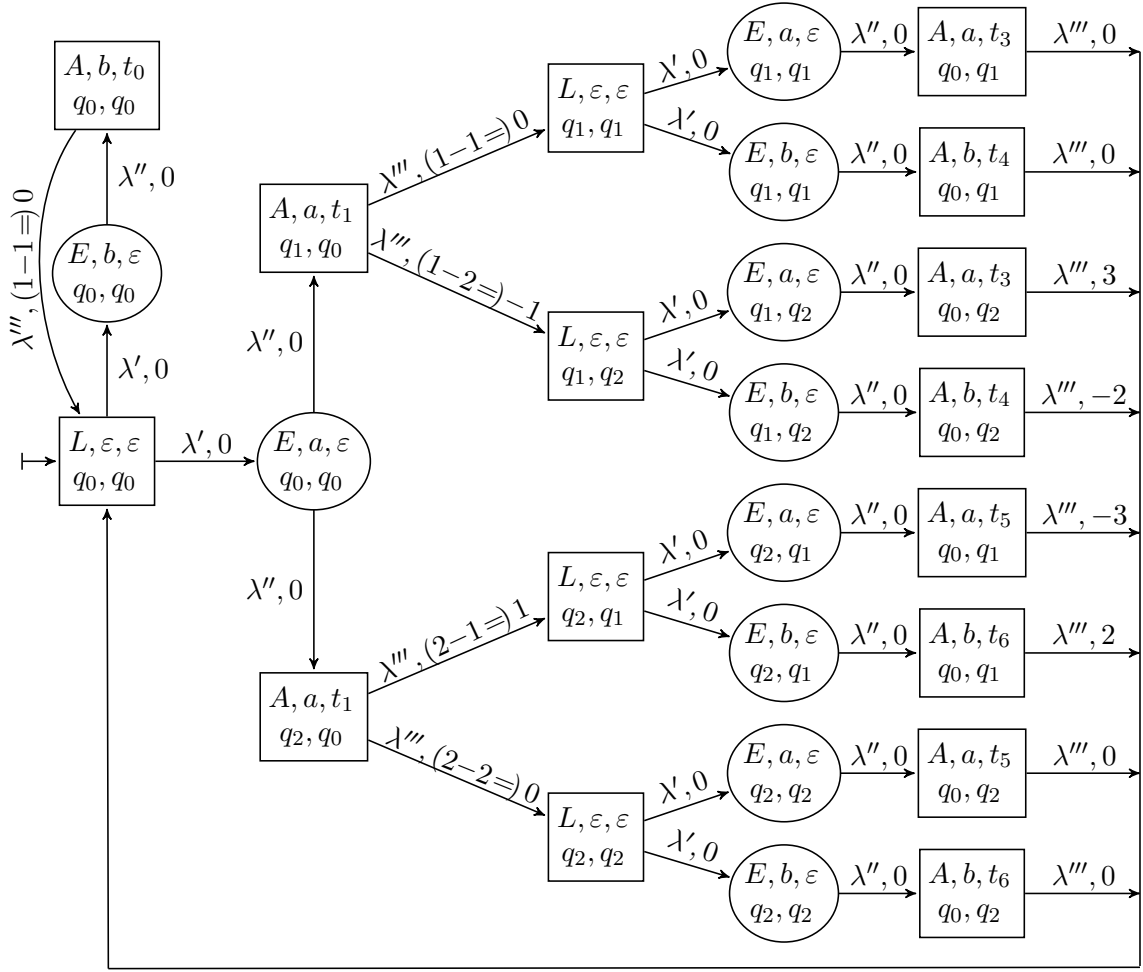
\begin{figure}[hb]
	\centering
	\begin{tikzpicture}[->,>=stealth',shorten >=1pt,auto,node distance=2cm, semithick, initial text=, every initial by arrow/.style={|->},state/.style={circle, draw, minimum size=0.5cm}]
		
		\node (A) {A $\DSum$ automaton $\A$ with a discount factor $\lambda=\frac{5}{7}$};
		\node[below of =A, initial above, state,xshift=-0.0cm,yshift=-0.0cm] (q0) {$q_0$};
		\node[state] (q1) [ right of=q0,xshift=0.75cm,yshift=0cm] {$q_1$};
		\node[state] (q2) [ left of=q0,xshift=-0.75cm,yshift=0cm] {$q_2$};
		
		\path 
		(q0) edge	[out=250, in=300,looseness=4] node [below,xshift=0.1cm,yshift=0.1cm]{$t_0\,\,\,\LW{b}{0}$} (q0)
		(q0) edge node[below,xshift=0.05cm,yshift=0.0cm] {$t_1\,\,\,\LW{a}{1}$}   (q1)		
		(q0) edge node[below,xshift=0.0cm,yshift=0.0cm] {$t_2\,\,\,\LW{a}{2}$}  (q2)		

		(q1) edge	[out=170, in=40,looseness=0.75] node [above,xshift=0cm,yshift=-0.15cm,rotate=-15]{$t_3\,\,\,\LW{a}{4}$} (q0)
		(q1) edge	[out=140, in=70,looseness=0.85] node [above,xshift=0cm,yshift=-0.1cm,rotate=-10]{$t_4\,\,\,\LW{b}{1}$} (q0)
		(q2) edge	[out=10, in=140,looseness=0.75] node [above,xshift=0.05cm,yshift=-0.15cm,rotate=12]{$t_5\,\,\,\LW{a}{1}$} (q0)
		(q2) edge	[out=40, in=110,looseness=0.85] node [above,xshift=0cm,yshift=-0.1cm,rotate=10]{$t_6\,\,\,\LW{b}{3}$} (q0)
		;
		
		\node[below of =A, xshift=0.0cm,yshift=-2.0cm] (G) 
		{A $\DSum$ game $G$ with discount factors $\lambda'=\frac{20}{21}, \lambda''=\frac{21}{22}$, and $\lambda'''=\frac{11}{14}$};

		\NodeL{L0}{below of =G,xshift=-5cm,yshift=-4.5cm,initial left}{q_0,q_0}

		\NodeE{E01}{right of =L0, xshift=0.75cm}{a}{q_0,q_0}
		\NodeE{E02}{above of =L0, yshift=0.5cm}{b}{q_0,q_0}

		\NodeA{A01}{above of =E01, xshift=-0.0cm,yshift=1.1cm}{a}{t_1}{q_1,q_0}
		\NodeA{A02}{below of =A01, yshift=-4.5cm}{a}{t_1}{q_2,q_0}
		\NodeA{A03}{above of =E02,yshift=0.5cm}{b}{t_0}{q_0,q_0}

		\NodeL{L1}{right of =A01, xshift=1.75cm,yshift=1.6cm}{q_1,q_1}
		\NodeL{L2}{below of =L1,yshift=-1.3cm}{q_1,q_2}
		\NodeL{L3}{below of =L2,yshift=-1.2cm}{q_2,q_1}
		\NodeL{L4}{below of =L3,yshift=-1.2cm}{q_2,q_2}

		\NodeE{E1}{right of =L1,xshift=0.5cm,yshift=0.75cm}{a}{q_1,q_1}
		\NodeE{E2}{below of =E1,yshift=0.4cm}{b}{q_1,q_1}
		\NodeE{E3}{below of =E2,yshift=0.4cm}{a}{q_1,q_2}
		\NodeE{E4}{below of =E3,yshift=0.4cm}{b}{q_1,q_2}
		\NodeE{E5}{below of =E4,yshift=0.4cm}{a}{q_2,q_1}
		\NodeE{E6}{below of =E5,yshift=0.4cm}{b}{q_2,q_1}
		\NodeE{E7}{below of =E6,yshift=0.4cm}{a}{q_2,q_2}
		\NodeE{E8}{below of =E7,yshift=0.4cm}{b}{q_2,q_2}
		
		\NodeA{A1}{right of =E1, xshift=0.5cm}{a}{t_3}{q_0,q_1}
		\NodeA{A2}{right of =E2, xshift=0.5cm}{b}{t_4}{q_0,q_1}
		\NodeA{A3}{right of =E3, xshift=0.5cm}{a}{t_3}{q_0,q_2}
		\NodeA{A4}{right of =E4, xshift=0.5cm}{b}{t_4}{q_0,q_2}
		\NodeA{A5}{right of =E5, xshift=0.5cm}{a}{t_5}{q_0,q_1}
		\NodeA{A6}{right of =E6, xshift=0.5cm}{b}{t_6}{q_0,q_1}
		\NodeA{A7}{right of =E7, xshift=0.5cm}{a}{t_5}{q_0,q_2}
		\NodeA{A8}{right of =E8, xshift=0.5cm}{b}{t_6}{q_0,q_2}
		
		\node (LL1)[right of =A1, xshift=0.25cm, inner sep=0,outer sep=0]{};
		\node (LL2)[right of =A2, xshift=0.25cm, inner sep=0,outer sep=0]{};
		\node (LL3)[right of =A3, xshift=0.25cm, inner sep=0,outer sep=0]{};
		\node (LL4)[right of =A4, xshift=0.25cm, inner sep=0,outer sep=0]{};
		\node (LL5)[right of =A5, xshift=0.25cm, inner sep=0,outer sep=0]{};
		\node (LL6)[right of =A6, xshift=0.25cm, inner sep=0,outer sep=0]{};
		\node (LL7)[right of =A7, xshift=0.25cm, inner sep=0,outer sep=0]{};
		\node (LL8)[right of =A8, xshift=0.25cm, inner sep=0,outer sep=0]{};
		
		\node (LL9)[right of =A8, yshift=-1cm,xshift=0.25cm, inner sep=0,outer sep=0]{};
		\node (LL10)[left of =LL9, yshift=0cm,xshift=-11.75cm, inner sep=0,outer sep=0]{};

		\path

		\ArrowLE{L0}{xshift=0.0cm,yshift=-0.1cm}{E01}
		\ArrowLE{L0}{right,xshift=0.0cm,yshift=-0.0cm}{E02}

		(E01) edge node[left,yshift=-0.0cm] {$\lambda'', 0$} (A01)
		(E01) edge node[left,yshift=-0.0cm] {$\lambda'', 0$} (A02)
		(E02) edge node[right,yshift=-0.0cm] {$\lambda'', 0$} (A03)

		\ArrowAL{A01}{xshift=1.2cm,yshift=0.4cm,rotate=25}{(1{-}1\!=\!)\,0}{L1}
		\ArrowAL{A01}{xshift=-1.4cm,yshift=0.5cm,rotate=-25}{(1{-}2\!=\!)\!-\!1}{L2}
		\ArrowAL{A02}{xshift=1.2cm,yshift=0.4cm,rotate=25}{(2{-}1\!=\!)\,1}{L3}
		\ArrowAL{A02}{xshift=-1.2cm,yshift=0.3cm,rotate=-22}{(2{-}2\!=\!)\,0}{L4}

	(A03) edge[out=235,in=120,looseness=0.85] node [xshift=-0.2cm,yshift=-1.2cm,rotate=90] {$\lambda''', (1{-}1\!=\!)\,0$}(L0)

		\ArrowLE{L1}{xshift=0.5cm,yshift=0.05cm,rotate=20}{E1}
		\ArrowLE{L1}{xshift=-0.5cm,yshift=0.05cm,rotate=-20}{E2}
		\ArrowLE{L2}{xshift=0.5cm,yshift=0.1cm,rotate=20}{E3}
		\ArrowLE{L2}{xshift=-0.5cm,yshift=0.1cm,rotate=-20}{E4}
		\ArrowLE{L3}{xshift=0.5cm,yshift=0.1cm,rotate=20}{E5}
		\ArrowLE{L3}{xshift=-0.5cm,yshift=0.1cm,rotate=-20}{E6}
		\ArrowLE{L4}{xshift=0.5cm,yshift=0.1cm,rotate=20}{E7}
		\ArrowLE{L4}{xshift=-0.5cm,yshift=0.1cm,rotate=-20}{E8}
		 
		\ArrowEA{E1}{A1}
		\ArrowEA{E2}{A2}
		\ArrowEA{E3}{A3}
		\ArrowEA{E4}{A4}
		\ArrowEA{E5}{A5}
		\ArrowEA{E6}{A6}
		\ArrowEA{E7}{A7}
		\ArrowEA{E8}{A8}
		
		\PreArrowAL{A1}{xshift=0.0cm,yshift=-0.1cm}{0}{LL1}
		\PreArrowAL{A2}{xshift=0.0cm,yshift=-0.1cm}{0}{LL2}
		\PreArrowAL{A3}{xshift=0.0cm,yshift=-0.1cm}{3}{LL3}
		\PreArrowAL{A4}{xshift=0.0cm,yshift=-0.1cm}{-2}{LL4}
		\PreArrowAL{A5}{xshift=0.0cm,yshift=-0.1cm}{-3}{LL5}
		\PreArrowAL{A6}{xshift=0.0cm,yshift=-0.1cm}{2}{LL6}
		\PreArrowAL{A7}{xshift=0.0cm,yshift=-0.1cm}{0}{LL7}
		\PreArrowAL{A8}{xshift=0.0cm,yshift=-0.1cm}{0}{LL8}
		
		(LL1) edge [-,semithick,out=180,in=180,looseness=0] (LL9) 
		(LL9)edge [-,semithick,out=135,in=180,looseness=0] (LL10) 
		(LL10)edge [semithick,out=0,in=270,looseness=0] (L0)
		
		;

	\end{tikzpicture}
	\caption{An example of a $\DSum$ automaton $\A$ and the corresponding $\DSum$ game $G$ with multiple discount factors, as per the proof of \cref{cl:DSum-NPcoNP}, such that Eve wins $G1(\A)$ if and only if she wins $G$ with respect to the non-strict $0$-threshold. Rectangular positions in $G$ are of Adam and ellipses of Eve.}
	\label{fig:DsumExample}
\end{figure}

$\DSum$ games are positionally determined~\cite{Sha53,ZP95,And06} so this algorithm also computes a finite-memory witness of \HDness for $\A$ that is of polynomial size in the state-space of $\A$. However, a positional witness also exists~\cite[Section 5]{HPR16}.

One might be tempted to think that the proof of \cref{cl:DSum-NPcoNP} generalises to discounted-sum automata with multiple discount factors \cite{BH21}. Yet, this is not the case: While the value function of discounted-summation with multiple discount factors is indeed present-focused, using an argument analogous to the one in \cite[Lemma 22]{BL21}, and the proof of \cref{cl:DSum-NPcoNP} indeed uses discounted-sum games with multiple discount factors, the issue is that the value difference between the discounted-summation of two weight sequences is much more involved with multiple discount factors than with a single one. Hence, the representation that we use in the proof of \cref{cl:DSum-NPcoNP} to capture this value difference no longer holds in the case of multiple discount factors. We leave open the question of the complexity (and decidability) of \HDness of discounted-sum automata with multiple discount factors.

\section{Deciding History-Determinism via Two Token Games}\label{sec:G2}

In this section we solve the \HDness problem of  $\Sup$, $\LimSup$ and $\LimInf$ automata on infinite words 
via two-token games. As is the case with B\"uchi and coB\"uchi automata \cite[Lemma 8]{BK18}\footnote{The B\"uchi automaton presented in \cite[Lemma 8]{BK18} is a weak one, so it can also be viewed as a coB\"uchi one.}, one-token games do not characterise \HDness of $\LimSup$ and $\LimInf$ automata. We showed in~\cref{sec:G1} that this is also the case for $\Sup$ automata on infinite words (unlike on finite words where one token does suffice).

For $\Sup$ and $\LimInf$, a possible  approach is to solve the letter game directly: an equivalent deterministic automaton can track the value of a word, and the winning condition of the letter game corresponds to comparing Eve's run to the one of the deterministic automaton. Unfortunately, determinising both $\Sup$ and $\LimInf$ automata is exponential in the number of states \cite[Theorem 13]{CDH10}, so the new game is large. In addition, for $\LimInf$ automata, its winning condition, which compares the $\LimInf$ value of two runs, is non-standard and needs additional work to be encoded into a parity game.
For $\LimSup$ automata the situation is even worse, as they are not necessarily equivalent to deterministic $\LimSup$ automata, so it is not obvious whether the winner of the letter game is decidable at all.

Here we first show, in \cref{sec:G2Characterizes}, that the $2$-token-game approach, used to resolve \HDness of B\"uchi and coB\"uchi automata, can be generalised to $\Sup$, $\LimSup$ and $\LimInf$ automata. 
Our proofs for the $\LimSup$ and $\LimInf$ cases follow the same structure, while relying on the $G_2$ characterisation of \HDness for B\"uchi and coB\"uchi automata respectively. We then show that $G_2$ also characterises the \HDness of $\Sup$ automata on infinite words, via reduction to the $\LimSup$ case. 

We then analyse the complexity of solving $G_2$ for these three automata classes, in \cref{sec:solveG2}. Perhaps surprisingly (since the naive approach to solving the letter game seems harder for $\LimSup$), we show that $G_2$ is solvable in quasipolynomial time for $\LimSup$ while for $\LimInf$ our algorithm is exponential in the number of weights (but not in the number of states). For $\Sup$ automata on infinite words, solving $G_2$ can be done in polynomial time, as it reduces to solving a coB\"uchi game of cubic size.

Without loss of generality, we assume the weights to be $\{1, 2, \ldots\}$.

\subsection{$G_2$ Characterises \HDness for some automata}\label{sec:G2Characterizes}
\hfill\\[-6pt]

\noindent
This section is dedicated to proving that a $\LimSup$, $\LimInf$ or $\Sup$ automaton on infinite words is \HD if and only if Eve wins the $2$-token game on it. In both the $\LimSup$ and the $\LimInf$ cases, the structure of the argument is similar.
One direction is immediate: if an automaton $\A$ is \HD, then Eve can use the letter-game strategy to win in $G_2(\A)$, ignoring Adam's tokens. The other direction requires more work. We use an additional notion, that of $k$-\HDness, which generalises \HDness, in the sense that Eve maintains $k$ runs, rather than only one, and needs at least one of them to be optimal. We will then show that if Eve wins $G_2(\A)$, then $\A$ is $k$-\HD for a finite $k$ (namely, the number of weights in $\A$ minus one); this is where the argument differs slightly according to the value function. Finally, we will show that for automata that are $k$-\HD, for any finite $k$, a strategy for Eve in $G_2(\A)$ can be combined with the $k$-\HD strategy to obtain a strategy for her in the letter game. 

Many of the tools used in this proof are familiar from the $\omega$-regular setting~\cite{BK18,BKLS20b}. The main novelty in the argument is the decomposition of the $\LimSup$ ($\LimInf$) automaton $\A$ with $k$ weights into $k-1$ B\"uchi (coB\"uchi) automata $\A_2,\ldots,\A_k$, each $A_i$ of which recognises the language of words of value at least $i$, that are \HD whenever Eve wins $G_2(\A)$. (The converse does not hold, namely $\A_2,\ldots,\A_k$ can be \HD even if Eve loses $G_2(\A)$ -- see \cref{fig:THDnotHD}.) 
In both cases, the \HD strategies for $\A_2,\ldots,\A_k$ can then be combined to prove the $k$-\HDness of $\A$.

\cref{fig:Flow} illustrates the flow of our arguments.

Finally, we show that $G_2$ also characterises the \HDness of $\Sup$ automata on infinite words using the characterisation for $\LimSup$.

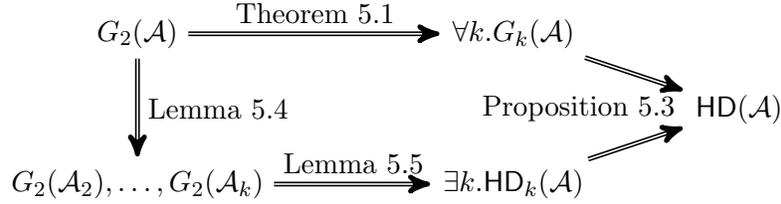
\begin{figure}[h]
	\centering
	\begin{tikzpicture}[->,>=stealth',shorten >=1pt,auto,node distance=2cm, semithick, initial text=, every initial by arrow/.style={|->},state/.style={circle, draw, minimum size=0.5cm}]
		
		\node  (G2) {$G_2(\A)$};
		\node[right of = G2,xshift=3cm] (Gk) {$\forall k. G_k(\A)$};
		\node[below of = G2] (G2x) {$G_2(\A_2),\ldots,G_2(\A_k)$};
		\node[below of = Gk] (HDk) {$\exists k. \HD_k(\A)$};
		\node[right of = Gk,xshift=1cm, yshift=-1cm] (HD) {$\HD(\A)$};
		
		\path 
		(G2) edge[double] node [above]{\cref{cl:2TokenKToken}}   (Gk)		
		(G2) edge[double] node [right]{\cref{cl:decomposeG2}}   (G2x)		
		(G2x) edge[double] node [above]{\cref{cl:G2Gk}}   (HDk)		
		(Gk) edge[double] node [below,xshift=-0.75cm, yshift=-0.15cm]{\cref{cl:kHD-G2iffHD}}   (HD)		
		(HDk) edge[double]  (HD)		
		;
	\end{tikzpicture}
	\caption{The flow of arguments for showing that $G_2(\A) \implies \HD(\A)$ for a $\LimInf$ or $\LimSup$ automaton $\A$.}
	\label{fig:Flow}
\end{figure}

We first generalise to quantitative automata Bagnol and Kuperberg's key insight that if Eve wins $G_2$, then she also wins $G_k$ for all $k$~\cite[Thm 14]{BK18}.

\begin{figure}[h]
	\centering
	\begin{tikzpicture}[->,>=stealth',shorten >=1pt,auto,node distance=2cm, semithick, initial text=, every initial by arrow/.style={|->},state/.style={circle, draw, minimum size=0.5cm}]
		
		\node  (A) {$\A$};
		\node[right of = A,below of =A, initial above, state,xshift=1.5cm,yshift=1cm] (s0) {$s_0$};
		\node[state] (s1) [ left of=s0] {$s_1$};
		\node[state] (s2) [ right of=s0] {$s_2$};
		\node[state] (s3) [right of=s2] {$s_3$};
		\node[state] (s4) [right of=s3] {$s_4$};
		
		\path 
		(s0) edge node [above]{$\LW{\Sigma}{1}$}   (s1)		
		(s0) edge node{$\LW{\Sigma}{1}$}  (s2)		
		
		(s1) edge	[loop above, out=120, in=70,looseness=5] node [left,xshift=-.1cm,yshift=-0.05cm]{$\LW{\Sigma}{2}$} (s1)
		(s2) edge  node[above] {$\LW{a}{3}$} (s3)
		(s2) edge  [out=-50, in=-130,looseness=.5] node[above,xshift=-0.6cm,yshift=-0.00cm] {$\LW{b}{1}$}(s4)
		
		(s3) edge	[loop above, out=120, in=70,looseness=5] node [right,xshift=.2cm,yshift=-0.05cm]{$\LW{\Sigma}{3}$} (s3)
		(s4) edge	[loop above, out=120, in=70,looseness=5] node [right,xshift=.2cm,yshift=-0.05cm]{$\LW{\Sigma}{1}$} (s4)
		;
		
		\node  (A2) [ below of=A, yshift=-0cm]{$\A_2$};
		\node[right of = A2,below of =A2, initial above, state,xshift=1.5cm,yshift=1cm] (s0) {$s_0$};
		\node[state] (s1) [ left of=s0] {$s_1$};
		\node[state] (s2) [ right of=s0] {$s_2$};
		\node[state] (s3) [right of=s2] {$s_3$};
		\node[state] (s4) [right of=s3] {$s_4$};
		
		\path 
		(s0) edge node [above]{$\Sigma$}   (s1)		
		(s0) edge node{$\Sigma$}  (s2)		
		
		(s1) edge[double]	[loop above, out=120, in=70,looseness=5] node [left,xshift=-.1cm,yshift=-0.05cm]{$\Sigma$} (s1)
		(s2) edge[double]  node[above] {$a$} (s3)
		(s2) edge  [out=-50, in=-130,looseness=.5] node[above,xshift=-0.6cm,yshift=-0.00cm] {$b$}(s4)
		
		(s3) edge[double]	[loop above, out=120, in=70,looseness=5] node [right,xshift=.1cm,yshift=0.1cm]{$\Sigma$} (s3)
		(s4) edge	[loop above, out=120, in=70,looseness=5] node [right,xshift=.1cm,yshift=0.1cm]{$\Sigma$} (s4)
		;
		
		\node  (A3) [ below of=A2, yshift=-0cm]{$\A_3$};
		\node[right of = A3,below of =A3, initial above, state,xshift=1.5cm,yshift=1cm] (s0) {$s_0$};
		\node[state] (s1) [ left of=s0] {$s_1$};
		\node[state] (s2) [ right of=s0] {$s_2$};
		\node[state] (s3) [right of=s2] {$s_3$};
		\node[state] (s4) [right of=s3] {$s_4$};
		
		\path 
		(s0) edge node [above]{$\Sigma$}   (s1)		
		(s0) edge node{$\Sigma$}  (s2)		
		
		(s1) edge	[loop above, out=120, in=70,looseness=5] node [left,xshift=-.1cm,yshift=-0.05cm]{$\Sigma$} (s1)
		(s2) edge[double]  node[above] {$a$} (s3)
		(s2) edge  [out=-50, in=-130,looseness=.5] node[above,xshift=-0.6cm,yshift=-0.00cm] {$b$}(s4)
		
		(s3) edge[double]	[loop above, out=120, in=70,looseness=5] node [right,xshift=.1cm,yshift=0.1cm]{$\Sigma$} (s3)
		(s4) edge	[loop above, out=120, in=70,looseness=5] node [right,xshift=.1cm,yshift=0.1cm]{$\Sigma$} (s4)
		;

	\end{tikzpicture}
	\caption{A $\LimSup$ automaton $\A$ and corresponding B\"uchi automata $\A_2$ and $\A_3$, as per \cref{cl:decomposeG2}. (Accepting transitions in $\A_2$ and $\A_3$ are marked with double lines.) Observe that $\A$ is not \HD and Eve loses the two-token game on $\A$, while both $\A_2$ and $\A_3$ are \HD. (In $\A$, if Eve goes from $s_0$ to $s_1$, Adam goes from $s_0$ to $s_2$ and continues with an $a$, and if she goes from $s_0$ to $s_2$, Adam goes from $s_0$ to $s_1$ and continues with a $b$. In $\A_2$ Eve goes from $s_0$ to $s_1$ and in $\A_3$ from $s_0$ to $s_2$.)}
	\label{fig:THDnotHD}
\end{figure}

\begin{thm}\label{cl:2TokenKToken}
	Given a quantitative automaton $\A$, if Eve wins $G_2(\A)$ then she also wins $G_k(\A)$ for any $k\in \mathbb{N}\setminus\{0\}$. Furthermore, if her winning strategy in $G_2(\A)$ has memory of size $m$ and $\A$ has $n$ states, then she has a winning strategy in $G_k(\A)$ with memory of size $n^{k-1} \cdot m^k$.
\end{thm}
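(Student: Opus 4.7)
The plan is to proceed by induction on $k$, with the core of the argument being the step $k \to k{+}1$. The base case $k=1$ is immediate: Eve wins $G_1(\A)$ by running $\sigma_2$ internally and feeding Adam's single real token into both of $\sigma_2$'s Adam-slots, since the $G_2$ winning condition then gives her virtual run value $\geq \max(\Val(T),\Val(T))=\Val(T)$. The case $k=2$ is the hypothesis.

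For the inductive step, let $\sigma_2$ be Eve's winning strategy in $G_2(\A)$ (memory $m$), and suppose inductively that she has a winning strategy $\sigma_k$ in $G_k(\A)$ with memory $M_k$. In $G_{k+1}(\A)$, against Adam's tokens $T_1,\dots,T_{k+1}$, Eve uses the standard \emph{pairing trick}: she runs two internal simulations concurrently with the outer play. The first is a simulated $G_2(\A)$ play in which Eve's side is played by $\sigma_2$, producing a virtual run $V$, while Adam's two tokens are taken to be $T_k$ and $T_{k+1}$. The second is a simulated $G_k(\A)$ play in which Eve's side is played by $\sigma_k$, producing Eve's real run in $G_{k+1}(\A)$, while Adam's $k$ tokens are $T_1,\dots,T_{k-1}$ together with the virtual run $V$, treated as an Adam token.

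At each round of the outer game, after Adam reveals a letter $\sigma$, Eve first queries $\sigma_2$ on $\sigma$ to pick $V$'s next transition, then queries $\sigma_k$ on $\sigma$ to pick her real transition; Adam's subsequent real moves $t_1,\dots,t_{k+1}$ are then fed back into the two simulations (as $t_k,t_{k+1}$ in the $G_2$ sim and as $t_1,\dots,t_{k-1}$ together with the already-chosen $V$-move in the $G_k$ sim), so that the memories of $\sigma_2$ and $\sigma_k$ advance correctly. Correctness is a short chase: $\sigma_2$ winning $G_2$ gives $\Val(V)\geq\max(\Val(T_k),\Val(T_{k+1}))$, and $\sigma_k$ winning $G_k$ gives that Eve's real-run value is at least $\max(\Val(T_1),\dots,\Val(T_{k-1}),\Val(V))$; combining, Eve's real run dominates every $\Val(T_i)$, so she wins $G_{k+1}(\A)$.

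For the memory analysis, the combined simulation needs to track $V$'s current state in $\A$ (factor $n$), the internal state of $\sigma_2$ (factor $m$), and the internal state of $\sigma_k$ (factor $M_k$), yielding the recurrence $M_{k+1}\leq n\cdot m\cdot M_k$ with $M_1=m$, which unrolls to the claimed bound $n^{k-1}m^k$. The main thing to get right — and the only real subtlety — is the synchronisation between the two internal games and the outer one, namely verifying that at each round the information each of $\sigma_2$ and $\sigma_k$ needs before moving is already available; this is immediate from the fact that in each round of $G_2$, $G_k$ and $G_{k+1}$ Eve moves before Adam.
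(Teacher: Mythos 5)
Your proof is correct and follows essentially the same argument as the paper: induction on $k$ using one virtual token, with one of the two strategies driving the virtual run and the other driving Eve's actual token, and the same memory recurrence $M_{k+1}\le n\cdot m\cdot M_k$. The only (immaterial) difference is that you let $\sigma_2$ drive the virtual token against Adam's last two tokens and $\sigma_k$ drive the real token, whereas the paper composes them the other way around ($s_{k}$ drives the virtual token against the first $k$ tokens, $s_2$ drives the real one); the domination chain and the bound are identical.
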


\begin{proof}
	This is the generalisation of~\cite[Thm 14]{BK18}. The proof is similar to Bagnol and Kuperberg's original proof, but without assuming positional strategies for Eve in $G_k(\A)$. If Eve wins $G_2(\A)$ then she obviously wins $G_1(\A)$, using her $G_2$ strategy with respect to two copies of Adam's single token in $G_1$. We thus consider below $G_k(\A)$ for every $k\in\Nat\setminus\{0,1,2\}$.
	
	Let $s_2$ be a winning strategy for Eve in $G_2(\A)$. We inductively show that Eve has a winning strategy $s_i$ in $G_i(\A)$ for each finite $i$. To do so, we assume a winning strategy $s_{i-1}$ in $G_{i-1}(\A)$. The strategy $s_i$ maintains some additional (not necessarily finite) memory that maintains the position of one virtual token in $\A$, a position in the (not necessarily finite) memory structure of $s_{i-1}$, and a position in the (not necessarily finite) memory structure of $s_2$. The virtual token is initially at the initial state of $\A$. 
	The strategy $s_i$ then plays as follows: at each turn, after Adam has moved his $i$ tokens and played a letter (or, at the first turn, just played a letter), it first updates the $s_{i-1}$ memory structure, by ignoring the last of Adam's tokens, and, treating the position of the virtual token as Eve's token in  $G_{i-1}(\A)$, it updates the position of the virtual token according to the strategy $s_{i-1}$; it then updates the $s_2$ memory structure by treating Adam's last token and the virtual token as Adam's $2$ tokens in $G_2(\A)$, and finally outputs the transition to be played according to $s_2$.
	
	We now argue that this strategy is indeed winning in $G_{i}(\A)$.
	Since $s_{i-1}$ is a winning strategy in $G_{i-1}(\A)$, the virtual token traces a run of which the value is at least as large as the value of any of the runs built by the first $i-1$ tokens of Adam. Since $s_2$ is also winning, the value of the run built by Eve's token is at least as large as the values of the runs built by the virtual token and by Adam's last token. Hence, Eve is guaranteed to achieve at least the supremum value of Adam's $i$ runs, making this a winning strategy in $G_i(\A)$.
	
	As for the memory size of a winning strategy for Eve in $G_k(\A)$, let $m$ be the memory size of her winning strategy in $G_2(\A)$ and $n$ the number of states in $\A$. Then, by the above construction of her strategy in $G_k(\A)$, the memory of her strategy in $G_3(\A)$ is $n$ for the virtual token times $m$ for the copy of her  memory in $G_2(\A)$ times $m$ for the copy of her  memory in $G_{i-1}(\A)=G_2(\A)$, namely $n\cdot m \cdot m = n \cdot m^2$. Then for $G_4(\A)$ it is $n \cdot m \cdot (n \cdot m^2) = n^2 \cdot m^3$; for $G_5(\A)$ it is $n \cdot m \cdot (n^2 \cdot m^3) = n^3 \cdot m^4$, and for $G_k(\A)$ it is $n^{k-1} \cdot m^k$.
\end{proof}

We proceed with the definition of $k$-\HDness (a related, but different notion is the ``width'' of an automaton \cite{KM19}), based on the $k$-runs letter game (not to be confused with $G_k$, the $k$-token game), which generalises the letter game.

\begin{defi}[$k$-\HD and $k$-runs letter game]\label{def:kHD}
	A configuration of the game on a quantitative automaton $\A=(\Sigma,Q,\iota,\delta)$ is a tuple $q^k\in Q^k$ of states of $\A$, initialised to $\iota^k$.
	
	In a configuration $(q_{i,1},\dots,q_{i,k})$, the game proceeds as follows to the next configuration $(q_{i+1,1},\dots,q_{i+1,k})$.  
	\begin{itemize}
		\item Adam picks a letter $\sigma_{i}\in\Sigma$, then
		\item Eve chooses for each $q_{i,j}$, a  transition
		$\trans{q_{i,j}}{\sigma_{i}:\weight_{i,j}}{q_{i+1,j}}$
	\end{itemize}
	In the limit, a play consists of an infinite word $w$ that is derived from the concatenation of $\sigma_0,\sigma_1,\ldots$, as well as of $k$  infinite sequences $\rho_0,\rho_1,\ldots$ of transitions.  Eve wins the play if $\max_{j\in \{1\ldots k\}}\Val(\rho_j) = \A(w)$. 
	
	If Eve has a winning strategy, we say that $\A$ is $k$-\HD, or that $\HD_k(\A)$ holds.
\end{defi}
Notice that the standard letter game (\cref{def:HistoryDet}) is a $1$-run letter game and standard \HD (\cref{def:HistoryDet}) is $1$-\HD.

Next, we use $\HD_k(\A)$ to show that $G_2$ characterises \HDness.

\begin{propC}[\cite{BK18}]\label{cl:kHD-G2iffHD}
	Given a quantitative automaton $\A$, if $\HD_k(\A)$ for some $k\in \mathbb{N}$, and Eve wins $G_k(\A)$, then $\A$ is \HD.
\end{propC}

\begin{proof}
	The argument is identical to the one used in~\cite{BK18}, which we summarise here.
	The strategy $\tau$ for Eve in $\HD_k(\A)$ provides a way of playing $k$ tokens that guarantees that one of the $k$ runs formed achieves the automaton's value on the word $w$ played by Adam. If Eve moreover wins $G_k(\A)$ with some strategy $s_k$, she can, in order to win in the letter game, play $s_k$ against Adam's letters and $k$ virtual tokens that she moves according to $\tau$. The winning strategy $\tau$ guarantees that one of the $k$ runs built by the $k$ virtual tokens achieves $\Val(w)$; then her strategy $s_k$ guarantees that her run also achieves $\Val(w)$.
\end{proof}

It remains to prove that if Eve wins $G_2(\A)$, then $\HD_k(\A)$ for some $k$.

Given a $\LimSup$ automaton $\A$, with weights $\{1,\dots, k\}$, we define $k-1$ auxiliary B\"uchi automata $\A_2,\dots,\A_{k}$ with acceptance on transitions: each $\A_x$ is a copy of $\A$, where a transition is accepting if its weight $i$ in $\A$ is at least $x$. (See \cref{fig:THDnotHD}.) Dually,
given a $\LimInf$ automaton $\A$, each $A_x$ is a coB\"uchi automaton, and consists of a copy of $\A$ where transitions with weights smaller than $x$ are rejecting, while those with weights $x$ or larger are accepting. Again, in both cases $\A_x$ recognises the set of words $w$ such that $\A(w)\geq x$.

We now use these auxiliary automata to argue that if $G_2(\A)$ then $\HD_{k-1}(\A)$.

\begin{lem}\label{cl:decomposeG2}
	Given a $\LimSup$ or $\LimInf$ automaton $\A$ with weights $\{1,\dots, k\}$, if Eve wins $G_2(\A)$, then for all $x\in \{2,\dots, k\}$, Eve also wins $G_2(\A_x)$.
\end{lem}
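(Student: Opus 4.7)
The plan is to show that the very strategy Eve uses to win $G_2(\A)$ also wins $G_2(\A_x)$ for each $x \in \{2,\dots,k\}$. The crucial point is that $\A_x$ is obtained from $\A$ by leaving the transition structure untouched and only re-annotating acceptance: a transition is accepting in $\A_x$ iff its weight in $\A$ is at least $x$. Hence a play of $G_2(\A_x)$ is literally a play of $G_2(\A)$ read under a different (and Boolean) winning condition, so any strategy of Eve in the former is also a strategy in the latter; all that needs to be verified is that the value-based winning condition of $G_2(\A)$ implies the threshold-based winning condition of $G_2(\A_x)$.

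Concretely, let $s$ be a winning strategy for Eve in $G_2(\A)$ and consider any play of $G_2(\A_x)$ in which Eve follows $s$. Let $\pi$ be Eve's run and $\pi_1,\pi_2$ be Adam's runs. Since $s$ is winning in $G_2(\A)$, we have $\Val(\pi) \geq \max(\Val(\pi_1),\Val(\pi_2))$, where $\Val$ is $\LimSup$ if $\A$ is a $\LimSup$ automaton and $\LimInf$ if $\A$ is a $\LimInf$ automaton. I then want to show that whenever one of $\pi_1,\pi_2$ is accepting in $\A_x$, so is $\pi$, which is exactly Eve's winning condition in $G_2(\A_x)$.

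For the $\LimSup$ case, $\A_x$ is a B\"uchi automaton in which a run is accepting iff infinitely many of its transitions carry weight at least $x$ in $\A$; because the weight set is finite, this is equivalent to $\LimSup(\cdot) \geq x$. Thus if some $\pi_i$ accepts in $\A_x$ then $\max(\LimSup(\pi_1),\LimSup(\pi_2)) \geq x$, so $\LimSup(\pi) \geq x$, and $\pi$ accepts in $\A_x$. The $\LimInf$ case is dual: coB\"uchi acceptance of a run in $\A_x$ is equivalent to all but finitely many of its transitions having weight at least $x$, i.e.\ to $\LimInf(\cdot) \geq x$, and the same inequality transfers directly.

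Since the argument reduces to observing that the quantitative inequality $\Val(\pi) \geq \max(\Val(\pi_i))$ implies the threshold-based B\"uchi (resp.\ coB\"uchi) inequality at every threshold $x$, there is no genuine obstacle. The only point that deserves care is the crisp equivalence between $\LimSup(\rho) \geq x$ and ``weight $\geq x$ infinitely often in $\rho$'' (and the dual equivalence for $\LimInf$), which relies on $\A$ having only finitely many weights. Note also that the construction is memory-free: Eve's strategy in $G_2(\A_x)$ has the same memory as her strategy in $G_2(\A)$.
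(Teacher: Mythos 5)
Your proof is correct and follows essentially the same route as the paper's: reuse Eve's $G_2(\A)$ strategy verbatim and observe that the quantitative inequality $\Val(\pi)\geq\max(\Val(\pi_1),\Val(\pi_2))$ transfers the threshold-$x$ B\"uchi (resp.\ coB\"uchi) acceptance from Adam's runs to Eve's. Your explicit remark that ``accepting in $\A_x$'' is equivalent to ``$\Val\geq x$'' because the weight set is finite is a worthwhile clarification that the paper leaves implicit.
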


\begin{proof}
	Since $\A_x$ is identical to $\A$ except for the acceptance condition or value function, Eve can use in $G_2(\A_x)$ her winning strategy in $G_2(\A)$.

	For the $\LimSup$ case, if one of Adam's runs sees an accepting transition infinitely often, the underlying transition of $\A$ visited infinitely often has weight at least $x$. Then, Eve's strategy guarantees that her run also sees infinitely often a value at least as large as $x$, corresponding to an accepting transition in $G_2(\A_x)$.
	
	Similarly, for the $\LimInf$ case, if one of Adam's runs avoids seeing a rejecting transition infinitely often in $\A_x$, then this run's value in $\A$ is at least $x$, and Eve's strategy guarantees that her run's value in $\A$ is at least $x$, meaning  that it avoids seeing a rejecting transition in $\A_x$ infinitely often, and accepts.
\end{proof}

\begin{lem}\label{cl:G2Gk}
	Given a $\LimSup$ or $\LimInf$ automaton $\A$ with weights $[1..k]$, if Eve wins $G_2(\A_x)$ for all $x\in [2..k]$ then $\HD_{k-1}(\A)$ holds.
\end{lem}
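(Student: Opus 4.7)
The plan is to combine per-level HD strategies into a single $(k-1)$-run letter-game strategy for $\A$. Since $G_2$ characterises \HDness for B\"uchi automata \cite{BK18} and for coB\"uchi automata \cite{BKLS20b}, the hypothesis yields, for every $x \in \{2,\dots,k\}$, a winning letter-game strategy $\sigma_x$ for Eve on $\A_x$.

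In the $(k-1)$-run letter game on $\A$, Eve identifies her $k-1$ tokens with the indices $x \in \{2,\dots,k\}$ and moves the token labelled $x$ using $\sigma_x$. Since $\A_x$ shares its transition structure with $\A$, this translation of moves is well defined.

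For correctness, fix a word $w$ generated by Adam and let $v = \A(w)$. Every run of $\A$ on $w$ has value at most $v$, so it suffices to exhibit one of Eve's runs of value at least $v$. If $v = 1$, this is immediate since all weights lie in $\{1,\dots,k\}$, so any run has value at least $1$. If $v \geq 2$, then $w$ belongs to $L(\A_v)$: in the $\LimSup$ case, $\A(w) \geq v$ means some run of $\A$ sees a weight at least $v$ infinitely often, which is an accepting run of the B\"uchi automaton $\A_v$; dually, in the $\LimInf$ case, $\A(w) \geq v$ means some run of $\A$ sees weights smaller than $v$ only finitely often, which is an accepting run of the coB\"uchi automaton $\A_v$. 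Hence $\sigma_v$ produces an accepting run of $\A_v$, which, translated back, is a run of $\A$ whose $\LimSup$ (respectively $\LimInf$) value is at least $v$, and therefore exactly $v$. So the maximum value of Eve's runs equals $\A(w)$, witnessing $\HD_{k-1}(\A)$.

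The only delicate point is the translation between acceptance in the auxiliary B\"uchi/coB\"uchi automaton $\A_x$ and the threshold set $\{w : \A(w) \geq x\}$ of the original value function; once this correspondence is set up, the argument is a clean product of the $k-1$ resolvers $\sigma_2,\dots,\sigma_k$, with no interaction between the tokens needed.
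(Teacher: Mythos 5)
Your proposal is correct and follows essentially the same route as the paper: invoke the known $G_2$ characterisations for B\"uchi and coB\"uchi automata to obtain letter-game resolvers for each $\A_x$, then run them independently on the $k-1$ tokens and observe that the token indexed by $v=\A(w)$ produces an accepting run of $\A_v$, hence a run of $\A$ of value $v$. Your explicit treatment of the $v=1$ case and of why the supremum is attained is a small touch of extra care that the paper elides, but the argument is the same.
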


\begin{proof}
	From \cref{cl:decomposeG2}, if Eve wins $G_2(\A)$, then for all $x\in [2..k]$, Eve also wins $G_2(\A_x)$. 
	We first argue that each $\A_x$ is \HD.

	Since each $A_x$ is a B\"uchi or coB\"uchi automaton, this implies that for all $x\in [2..k]$, the automaton $A_x$ is \HD~\cite{BK18,BKLS20b}, witnessed by a winning strategy $s_x$ for Eve in the letter game on each $\A_x$.
	
	Now, in the $(k-1)$-run letter game on $\A$, Eve can use each $s_x$ to move one token. Then, if Adam plays a word $w$ with some value $\Val(w)=i$, this word is accepted by $\A_i$, and therefore the strategy $s_i$ guarantees that the run of the $i^{th}$ token achieves at least the value $i$, corresponding to seeing accepting transitions of $\A_i$ infinitely often for the $\LimSup$ case, or eventually avoiding rejecting transitions in the $\LimInf$ case.
\end{proof}

Finally, we combine the $G_2$ and $\HD_{k-1}$ strategies in $\A$ to show that $\A$ is \HD. 

\begin{thm}\label{cl:LimInfSupG2}
	A nondeterministic $\LimSup$ or $\LimInf$ automaton $\A$ is \HD if and only if Eve wins $G_2(\A)$.
\end{thm}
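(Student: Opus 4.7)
The plan is to assemble the pieces already developed in the section, following exactly the diagram in \cref{fig:Flow}. The easy direction is the forward one: if $\A$ is \HD, then Eve's winning strategy in the letter game also wins $G_2(\A)$, since she can simply ignore Adam's two tokens and let her resolver build a run of optimal value against the word Adam spells out. This gives the ``only if'' direction with no extra work.

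For the converse, suppose Eve wins $G_2(\A)$. Let $k$ be the number of distinct weights occurring in $\A$, so that weights lie in $[1..k]$. I would then chain together the three results cited in the flow diagram. First, by \cref{cl:2TokenKToken}, a winning strategy in $G_2(\A)$ lifts to a winning strategy in $G_k(\A)$ (in fact in $G_m(\A)$ for every finite $m$), and in particular in $G_{k-1}(\A)$. Second, by \cref{cl:decomposeG2}, Eve's win in $G_2(\A)$ transfers to each of the auxiliary B\"uchi ($\LimSup$ case) or coB\"uchi ($\LimInf$ case) automata $\A_2,\dots,\A_k$, giving Eve a win in each $G_2(\A_x)$. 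Third, by \cref{cl:G2Gk}, these wins together with the known characterisation of \HDness via $G_2$ for B\"uchi and coB\"uchi automata \cite{BK18,BKLS20b} yield that $\HD_{k-1}(\A)$ holds.

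At this point, I have both that Eve wins $G_{k-1}(\A)$ and that $\HD_{k-1}(\A)$ holds. \cref{cl:kHD-G2iffHD} then immediately gives that $\A$ is \HD, concluding the proof. The only substantive work is book-keeping: verifying that the value of $k$ used to instantiate each lemma is the same (the number of weights of $\A$), and noting that the reduction goes through identically in the $\LimSup$ and $\LimInf$ cases because each ingredient (the lift from $G_2$ to $G_k$, the decomposition into auxiliary automata, and the collapse of $\HD_k$ under $G_k$-winning) has been proved for both value functions.

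There is no real obstacle here since all the heavy lifting is packaged in the preceding lemmas; the main point to be careful about is that \cref{cl:G2Gk} relies crucially on the fact that $G_2$ characterises \HDness for \emph{B\"uchi and coB\"uchi} automata, which is precisely why the argument works uniformly for $\LimSup$ and $\LimInf$ but would not obviously extend to other value functions whose level automata live in a richer parity class.
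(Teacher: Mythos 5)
Your proposal is correct and follows exactly the paper's own argument: the forward direction by ignoring Adam's tokens, and the converse by chaining \cref{cl:decomposeG2} and \cref{cl:G2Gk} to get $\HD_{k-1}(\A)$, \cref{cl:2TokenKToken} to get a win in $G_{k-1}(\A)$, and \cref{cl:kHD-G2iffHD} to conclude. No substantive differences.
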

\begin{proof}
	If $\A$ is \HD then Eve can use the letter-game strategy to win in $G_2(\A)$, ignoring Adam's moves. 
	If Eve wins $G_2(\A)$ then by \cref{cl:decomposeG2} and \cref{cl:G2Gk} she wins $\HD_{k-1}(\A)$, where $k$ is the number of weights in $\A$. By \cref{cl:2TokenKToken} she also wins $G_{k-1}(\A)$ and, finally, by \cref{cl:kHD-G2iffHD} we get that $\A$ is \HD.
\end{proof}

We show next that $G_2$ characterises \HDness also for $\Sup$ automata, by reducing the problem to the case of $\LimSup$ automata.
\begin{cor}\label{cl:G2-oSup}
	A nondeterministic $\Sup$ automaton $\A$ on infinite words is \HD if and only if Eve wins $G_2(\A)$.
\end{cor}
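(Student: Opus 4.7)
The plan is to reduce Corollary~\ref{cl:G2-oSup} to the already-proved $\LimSup$ case of Theorem~\ref{cl:LimInfSupG2} via a standard product construction that records the running maximum weight. Given a $\Sup$ automaton $\A = (\Sigma, Q, \iota, \delta)$ with finite weight set $W$ appearing on transitions, I would define a $\LimSup$ automaton $\A'$ with state space $Q \times W_\bot$, where $W_\bot = W \cup \{\bot\}$ and $\bot$ is a formal symbol strictly below every weight. The initial state is $(\iota, \bot)$, and for every transition $\trans{q}{\sigma : w}{q'}$ of $\A$ and every $v \in W_\bot$, I include a transition $\trans{(q,v)}{\sigma : \max(v,w)}{(q', \max(v,w))}$ in $\A'$, using the convention $\max(\bot, w) = w$.

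Along any run of $\A'$, the emitted weight sequence is non-decreasing, so its $\LimSup$ equals its supremum, which in turn equals the $\Sup$-value of the projected run of $\A$ obtained by erasing the second coordinate. This projection sets up a natural correspondence between runs of $\A$ on a word $u$ and runs of $\A'$ on $u$ that preserves values, yielding $\A'(u) = \A(u)$ for every $u$. Using the same correspondence, Eve's (and Adam's) legal moves from any configuration of the letter game or $G_2$ on $\A'$ biject with those from the projected configuration on $\A$, and a play is winning for Eve in one game iff the corresponding play is winning in the other, since the $\LimSup$-values of the $\A'$-runs coincide with the $\Sup$-values of the projected $\A$-runs.

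It follows that $\A$ is \HD iff $\A'$ is \HD, and Eve wins $G_2(\A)$ iff she wins $G_2(\A')$, so applying Theorem~\ref{cl:LimInfSupG2} to the $\LimSup$ automaton $\A'$ closes the argument. The step I would verify most explicitly, although it amounts to routine bookkeeping, is the equivalence of winning conditions along corresponding plays: it rests entirely on the elementary fact that any non-decreasing sequence's $\LimSup$ equals its supremum, which ensures that every value comparison underpinning the winning conditions in $\A'$ mirrors exactly the corresponding comparison in $\A$.
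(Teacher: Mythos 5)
Your proposal is correct and is essentially the paper's own proof: the paper also reduces to the $\LimSup$ case of \cref{cl:LimInfSupG2} by building a $\LimSup$ automaton whose states are copies of $\A$ indexed by the running maximum weight, which is exactly your $Q\times W_\bot$ construction up to the initial-copy convention. The only point worth tightening is your claim that moves ``biject'': distinct $\A$-transitions with weights below the current maximum can collapse to the same $\A'$-transition, so the correspondence is a value-preserving surjection rather than a bijection, but any choice of preimage works and the strategy transfers go through as you describe.
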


\begin{proof}
	One direction is direct: if $\A$ is \HD, then Eve wins $G_2(\A)$ by using her \HD strategy and ignoring Adam's tokens.
	
	For the other direction,
	given a $\Sup$ automaton $\A$ with initial state $\iota$ and $k$ weights, we construct a $\LimSup$ automaton $\A'$, such that i) if $\A'$ is \HD then $\A$ is \HD, and ii) if Eve wins  $G_2(\A)$, then she also wins $G_2(\A')$. As by \cref{cl:LimInfSupG2} Eve wins $G_2(\A')$ if and only if $\A'$ is \HD, we conclude that if Eve wins $G_2(\A)$, then $\A$ is \HD.

	$\A'$ has the same alphabet as $\A$. The state-space of $\A'$ consists of $k$ copies of $\A$, one for each weight: for each state $q$ of $\A$, we have states $q_i$ of $\A'$, for $i\in [1..k]$. In the $i^{\mathit{th}}$ copy of $\A$, we only keep transitions of weight up to $i$ in $\A$, and all of them have weight $i$ in $\A'$. Transitions of higher weight $x$ move from the $i^{\mathit{th}}$ copy to the $x^{\mathit{th}}$ copy. More precisely, for each pair of weights $i$ and $x$:
	\begin{itemize}
		\item $(q_i, \sigma, i,q_i')$ is a transition in $\A'$ if $(q,\sigma,x,q')$ is a transition in $\A$ and $x\leq i$.
		\item $(q_i, \sigma, x, q_x')$ is a transition in $\A'$ if $(q,\sigma, x, q')$ is a transition in $\A$ and $x>i$.
	\end{itemize}
	The initial state of $\A'$ is $\iota_1$.
	
	First note that this $\LimSup$-automaton computes the same function as $\A$: runs of $\A$ and $\A'$ are in value-preserving bijection since in $\A'$ the weight that occurs on each transition is the largest weight seen so far in the corresponding run of $\A$.
	
	Then, observe that if $\A'$ is \HD then $\A$ is \HD. 
	Indeed, Eve's winning strategy $s'$ in the letter game on $\A'$ can be used in the letter game on $\A$ by choosing the transition $(q,\sigma, x, q')$ instead of $(q_i,\sigma, x, q_x')$ and the available transition $(q, \sigma, x, q')$ maximising   $x$ instead of $(q_i,\sigma, i, q_i')$. Again, this strategy achieves a run of the same value as $s'$ on every word since whenever $s'$ moves to the next copy of $\A$, it sees a transition of the corresponding weight, guaranteeing at least this value. Eve therefore also wins in the letter game on $\A$.

	Similarly, we claim that if Eve wins $G_2(\A)$, she also wins $G_2(\A')$. Indeed, given a winning strategy $s$ in $G_2(\A)$, she can play according to the strategy $s'$ that chooses the available $(q_i,\sigma,i,q_x')$ or $(q_i, \sigma, x, q_x')$ transition whenever $s$ chooses $(q,\sigma,x,q')$, and interprets Adam's moves $(q_i,\sigma, x, q_x')$ in $G_2(\A')$ as  $(q,\sigma, x, q')$ in $G_2(\A)$ and Adam's moves $(q_i,\sigma, i, q_i')$ in $G_2(\A')$ as the available transition $(q, \sigma, x, q')$ maximising  $x$ in $G_2(\A)$. Then, whatever values Adam's tokens achieve in $G_2(\A')$, $s$ guarantees that Eve's token in $G_2(\A)$ does better, and $s'$ guarantees this same value for Eve's token in $G_2(\A')$. 
\end{proof}

\begin{rem}
	An alternative proof for the above corollary can be obtained by following the same proof schema as for \cref{cl:LimInfSupG2}, but decomposing the $\Sup$ automaton into $\Reach$ automata instead of B\"uchi or coB\"uchi automata.
\end{rem}

\subsection{Solving $G_2$ and Deciding \HDness}\label{sec:solveG2}
\hfill\\[-6pt]

\noindent
Now that we have established that $G_2$ characterises history-determinism for $\Sup$, $\LimSup$, and $\LimInf$ automata, we study the complexity of solving $G_2$ in each case. We start with the case of $\Sup$ automata on infinite words, which is the simplest, via reduction to polynomially sized coB\"uchi game.

\begin{thm}\label{cl:solveG2-Sup}
	Deciding whether a $\Sup$ automaton $\A$ on infinite words is \HD can be done in time polynomial in the size of $\A$.
\end{thm}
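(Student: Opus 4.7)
The plan is to reduce $G_2(\A)$ to a coB\"uchi game of polynomial size, which can then be solved in polynomial time. By \cref{cl:G2-oSup}, it suffices to decide who wins $G_2(\A)$. I would build a game arena whose positions are tuples $(q, p_1, p_2, m, m_1, m_2, \sigma, t)$, where $q, p_1, p_2 \in Q$ are the current positions of Eve's and Adam's two tokens, $m, m_1, m_2$ are the maximum weights seen so far on each of the three runs (with a distinguished minimum value $\bot$ until the first transition is taken), $\sigma \in \Sigma \cup \{\varepsilon\}$ is Adam's most recently played letter, and $t \in \{L, E, A\}$ indicates the current turn (Adam choosing a letter, Eve choosing a transition, or Adam choosing his two transitions). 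Moves faithfully encode those of $G_2(\A)$, updating the relevant $m$-component to the maximum of its old value and the weight of the transition just taken. Since the number of distinct weights in $\A$ is at most the number of its transitions, the arena has size polynomial in $|\A|$.

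For the winning condition, I would call a position \emph{bad} for Eve if $m < \max(m_1, m_2)$, and impose the coB\"uchi objective requiring that bad positions are visited only finitely often. The key observation justifying this encoding is that the tracked maxima $m, m_1, m_2$ are monotonically non-decreasing and bounded by the largest weight of $\A$; hence along any play they change only finitely often and eventually stabilize to the $\Sup$ values of the respective runs. Consequently, after stabilization the bad predicate either holds at every subsequent position (precisely when Eve's $\Sup$ is strictly below the maximum of Adam's two $\Sup$ values) or at none of them, so the coB\"uchi condition captures the $G_2$ winning condition $\Sup(\pi) \geq \max(\Sup(\pi_1), \Sup(\pi_2))$ exactly.

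Since coB\"uchi games can be solved in time polynomial (indeed linear) in the size of the arena, and the arena is of polynomial size in $|\A|$, deciding whether Eve wins $G_2(\A)$, and therefore whether $\A$ is \HD, is in \PTime. The main point to verify carefully will be the play-to-play correspondence between $G_2(\A)$ and the constructed game, together with the monotonicity argument that turns ``bad visited only finitely often'' into the $\Sup$-comparison winning condition; both steps are routine once the bookkeeping is set up, and no subtle obstacle is expected.
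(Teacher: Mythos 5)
Your proposal is correct and follows essentially the same route as the paper's proof: reduce to $G_2(\A)$ via \cref{cl:G2-oSup}, encode it as a polynomial-size coB\"uchi game tracking the running maxima of the three runs, and observe that these maxima stabilize so that the coB\"uchi condition (finitely many positions where Eve's maximum is below Adam's) captures the $\Sup$-comparison. The only cosmetic difference is that you track Adam's two maxima separately where the paper merges them into a single variable $x_A = \max(m_1,m_2)$, which does not affect polynomiality.
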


\begin{proof}
	By \cref{cl:G2-oSup}, it is enough to solve $G_2(\A)$.
	
	We encode $G_2(\A)$ for a $\Sup$ automaton $\A=(\Sigma,Q,\iota,\delta)$ as a coB\"uchi game as follows. The arena is that of $G_2(\A)$, represented as the product of the alphabet and three copies of $\A$, to reflect the current letter and the current position of each of the three runs, and a variable indicating whose turn it is to move. In addition, a variable $x_E$ keeps track of the greatest value seen by Eve's run so far,  and $x_A$ keeps track of the greatest value seen by either of Adam's runs so far.  The winning condition for Eve  is then that eventually $x_E\geq x_A$ remains true, which is a coB\"uchi condition.
	The size of the arena is thus polynomial, and the complexity of solving coB\"uchi games is quadratic~\cite{CH12}.
\end{proof}

We proceed with the $\LimSup$ and $\LimInf$ cases, for which $G_2$ can be solved via a reduction to a parity game. The $G_2$ winning condition for $\LimSup$ automata can be encoded by adding carefully chosen priorities to the arena of $G_2(\A)$, while for $\LimInf$ the encoding requires additional positions.

\begin{thm}\label{cl:solveG2LimSup}
	Deciding whether a $\LimSup$ automaton $\A$ of size $n$ with $k$ weights is \HD is quasipolynomial in $n$, and if $k$ is in $O(\log n)$, in time polynomial in $n$.
\end{thm}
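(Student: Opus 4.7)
The plan is, by \cref{cl:LimInfSupG2}, to reduce deciding \HDness of $\A$ to solving $G_2(\A)$, and then to encode $G_2(\A)$ as a parity game on an arena of polynomial size in $n$ with $2k$ priorities, which can then be solved using a quasipolynomial parity-game algorithm.

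For the arena, I follow the pattern of \cref{cl:FiniteSupPtime,cl:InfPtimeInfiniteWords}: positions are tuples $(\sigma, q, p_1, p_2, t)$, where $\sigma \in \Sigma \cup \{\varepsilon\}$ records Adam's last letter choice, $q, p_1, p_2 \in Q$ track the current positions of Eve's token and Adam's two tokens, and $t \in \{L, E, A\}$ marks whose turn it is to move. The moves and position ownership faithfully encode those of $G_2(\A)$, so the arena has $O(|\Sigma| \cdot n^3)$ positions, and its plays are in bijection with plays of $G_2(\A)$.

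The heart of the proof is the priority assignment. Assuming weights lie in $\{1, \ldots, k\}$, I assign priority $2i$ to every arena edge on which Eve plays a transition of weight $i$, priority $2i-1$ to every edge on which Adam plays a transition (of either of his two tokens) with weight $i$, and priority $0$ to the remaining edges (such as Adam's letter choices); this yields $2k$ priorities in total. Let $j_E$ be the largest weight Eve's token sees infinitely often along a play, and $j_A$ the largest weight seen infinitely often by either of Adam's tokens. Then the highest priority occurring infinitely often is $\max(2 j_E,\, 2 j_A - 1)$, which is even precisely when $j_E \geq j_A$: indeed, $j_E \geq j_A$ gives $2 j_E \geq 2 j_A > 2 j_A - 1$, while $j_E < j_A$ forces $2 j_E \leq 2 j_A - 2 < 2 j_A - 1$. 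Since the $\LimSup$ value of a run is by definition the largest weight seen infinitely often, the parity condition thus coincides exactly with the winning condition of $G_2(\A)$ for Eve.

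The resulting parity game has $O(|\Sigma| \cdot n^3)$ positions and $2k$ priorities, so applying a known quasipolynomial parity-game algorithm yields a procedure quasipolynomial in $n$, and polynomial in $n$ whenever $k = O(\log n)$. I expect the only genuinely nontrivial step to be verifying correctness of the priority encoding, which reduces to the small case analysis above; the arena construction and the import of the parity-game complexity bound are then routine.
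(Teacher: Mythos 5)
Your proposal is correct and follows essentially the same route as the paper: reduce to $G_2(\A)$ via \cref{cl:LimInfSupG2}, build the product arena with three copies of $\A$, assign priority $2i$ to Eve's weight-$i$ moves and $2i-1$ to Adam's, and solve the resulting quasipolynomially. Your explicit case analysis showing the maximal priority seen infinitely often is $\max(2j_E, 2j_A-1)$ is just a slightly more detailed write-up of the paper's correctness argument.
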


\begin{proof}
	By \cref{cl:LimInfSupG2}, it is enough to solve $G_2(\A)$.
	
	We encode the game $G_2(\A)$, for a $\LimSup$ automaton $\A=(\Sigma,Q,\iota,\delta)$, into a parity game as follows. The arena is simply the arena of $G_2(\A)$, seen as a product of the alphabet and three copies of $\A$, to reflect the current letter and the current position of each of the three runs (one for Eve, two for Adam), and a variable to keep track of which turn it is to play. (Adam's turn to give a letter, Eve's turn to choose a transition, Adam's turn to choose his first transition, or Adam's turn to choose his second transition.)
	
	Adam's letter-picking moves are labelled with priority $0$, Eve's choices of transition $\trans{q}{\sigma:\weight}{q'}$ are labelled with priority $2\weight$ and Adam's choices of transition $\trans{q}{\sigma:\weight}{q'}$  are labelled with priority $2\weight -1$. 
	
	We claim that Eve wins this parity game if and only if she wins $G_2(\A)$, that is, the priorities correctly encode the winner of $G_2(\A)$. Observe that the even priorities seen infinitely often in a play of the parity game are exactly priorities $2x$, where $x$ is a weight seen infinitely often in Eve's run in the corresponding play in $G_2(\A)$. The odd priorities seen infinitely often on the other hand are $2x-1$, where $x>0$ occurs infinitely often on one of Adam's runs in the corresponding play of $G_2(\A)$. Hence, Eve can match the maximal value of Adam's runs in $G_2(\A)$ if and only if she can win the parity game that encodes $G_2(\A)$.
	
	The number of positions in this game is polynomial in the size $n$ of $\A$; the maximal priority is linear in the number $k$ of its weights. It can be solved in quasipolynomial time, or in polynomial time if $k$ is in $O(\log n)$, using the reader's favourite state-of-the-art parity game algorithm, for instance~\cite{CJKLS17}.
\end{proof}

\begin{thm}\label{cl:solveG2liminf}
	Deciding whether a $\LimInf$ automaton $\A$ of size $n$ with $k$ weights is \HD can be done in time exponential in $k$, and if $k$ is in $O(\log n)$, in time polynomial in $n$.
\end{thm}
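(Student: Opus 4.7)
The plan is to combine \cref{cl:LimInfSupG2} with a reduction from $G_2(\A)$ to a parity game, mirroring the $\LimSup$ case (\cref{cl:solveG2LimSup}) but with additional memory to cope with the $\LimInf$ winning condition.

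The key obstacle is that, unlike for $\LimSup$, priorities cannot be assigned directly to the moves of the standard three-copy $G_2$ arena to encode the winning condition. The reason is twofold: the minimum weight seen infinitely often in an interleaved sequence does not decompose into the three individual runs' $\LimInf$ values the way the maximum does, and Eve must match the \emph{maximum} of Adam's two $\LimInf$s, which requires tracking the two values independently. My plan is to augment the arena with memory of size exponential in $k$ -- for instance, a Latest Appearance Record over the $k$ weights maintained for each of the three runs, or a coarser subset-based memory with $2^{O(k)}$ states -- so that the winning condition becomes a parity condition with $O(k)$ priorities on an enlarged arena of size $n \cdot 2^{O(k)}$. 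Priorities would be assigned to the memory-updating moves in such a way that, for each run, its $\LimInf$ is reflected in the maximum priority it contributes infinitely often, while Eve's priority compares favourably iff her $\LimInf$ dominates the larger of Adam's two.

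Solving the resulting parity game with a state-of-the-art algorithm~\cite{CJKLS17} would then yield the claimed bounds: exponential in $n$ in general, since $k$ may be as large as $n$; and polynomial in $n$ when $k = O(\log n)$, since in that regime both the arena size and the number of priorities remain polynomial, respectively logarithmic, in $n$.

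The main obstacle is the design of the memory structure so that the $\LimInf$ comparison genuinely translates to a parity condition of the claimed size, rather than to a more complex Streett or Muller condition that would require further blowup. The $\textsf{LAR}$-style construction is the classical tool for Muller-to-parity reductions, but in the multi-run setting one must take care to assign priorities so that Eve's priority on each move reflects her $\LimInf$, while Adam's priorities reflect the maximum of his two $\LimInf$s, rather than their minimum or some combined index mixing the two runs.
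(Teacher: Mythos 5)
Your high-level plan coincides with the paper's: reduce to $G_2(\A)$ via \cref{cl:LimInfSupG2}, build the three-copy arena, add memory exponential in $k$, and solve the resulting parity game, and your diagnosis of why the $\LimSup$-style direct labelling fails (the $\max$ of Adam's two $\LimInf$ values is not a $\LimInf$ of the interleaved weight sequence) is correct. However, the proposal has a genuine gap, and you flag it yourself: the entire technical content of the theorem is the design of the memory structure and priority assignment, and you only gesture at candidates (a LAR per run, or ``a coarser subset-based memory'') without verifying that any of them yields a parity condition with $O(k)$ priorities that correctly expresses ``Eve's $\LimInf$ is at least the maximum of Adam's two $\LimInf$s''. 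As stated, one cannot check that the reduction is sound, which is precisely the step where a naive attempt degenerates into a Muller/Streett condition.

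For comparison, the paper's construction is lighter than what you propose, and your framing of ``memory for each of the three runs'' is slightly off target. Eve's run needs \emph{no} memory at all: her move on a weight-$i$ transition is simply labelled with the odd priority $2(k-i+1)-1$ (low weights give high priorities, the reversal of the $\LimSup$ case). Memory is needed only to aggregate Adam's two runs, and the key observation is that $\max(v_1,v_2)\le v_E$ is equivalent to: for the minimal $j$ such that \emph{both} of Adam's runs see weight $\le j$ infinitely often, $v_E \ge j$. This ``both runs'' event is detected by a vector of flags $x_j\in\{0,1,2\}$, one per weight, recording which of Adam's tokens has seen a weight $\le j$ since the last reset; when the other token follows suit, $x_j$ resets and the move emits the even priority $2(k-j+1)$ (for the minimal resetting $j$). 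This gives memory $3^k$ and $2k$ priorities, from which the claimed bounds follow exactly as you compute. A full LAR over the weights is unnecessary here and would complicate the priority assignment; if you pursue your own route you must still supply and verify a concrete structure of this kind.
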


\begin{proof}
	By \cref{cl:LimInfSupG2}, it is enough to solve $G_2(\A)$.
	
	As in the proof of \cref{cl:solveG2LimSup}, we can represent $G_2(\A)$ as a game on an arena that is the product of three copies of $\A$, one for Eve and two for Adam. The winning condition for Eve is that the smallest weight seen infinitely often on the run built on her copy of $\A$ should be at least as large as both of the minimal weights seen infinitely often on the runs built on Adam's copies.
	We will encode this winning condition as a parity condition, but, unlike in the $\LimSup$ case, we will need to use an additional memory structure, which we describe now.
	
	The weights on Eve's run will be encoded by \textit{odd} priorities, with smaller weights corresponding to higher priorities, as for $\LimInf$ the lowest weight seen infinitely often is the one that matters, while weights on Adam's runs will be encoded by \textit{even} priorities, but only once both of Adam's runs have seen the corresponding weight or a lower one.
	Keeping track of this last condition requires the following additional memory structure, which encodes which of Adam's runs has seen which weight recently.

\begin{figure}[h]
    \centering
    \begin{tikzpicture}[thick,initial text=, every initial by arrow/.style={|->}]
    \tikzset{every state/.style = {minimum size =10}}

    \node[state, initial below] (0) at (0,0) {$x_5=0$};
    \node[state] (1) at (4,0) {$x_5=1$};
    \node[state] (2) at (-4,0) {$x_5=2$};

    \path[-stealth]
     (0) edge node[above] {$\tau_1$} (1)
     (0) edge[loop above] node[right,xshift=0.1cm] { $\setminus \{\tau_1, \tau_2 \}$} ()
     (1) edge[loop above] node[right] {$\setminus \{\tau_2\}$} ()
     (0) edge node[below] {$\tau_2$} (2)
     (2) edge[loop above] node[right, xshift=0.1cm] {$\setminus\{\tau_1\}$} ()
     (1) edge[bend left] node[below] {$\tau_2$} (0)
     (2) edge[bend left] node[above] {$\tau_1$} (0)
     ;
    \end{tikzpicture}
 \caption{Memory structure as described in the proof of \cref{cl:solveG2liminf}
 	for keeping track of when both of Adam's runs in $G_2(\A)$ have seen a priority smaller than $i=5$. The labels $\tau_1$ and $\tau_2$ stand for moves in $G_2(\A)$ that correspond to Adam choosing a transition of $\A$ with weight $w\leq 5$ in his first and second runs, respectively. A label $\setminus \{\ldots\}$ stands for all moves in $G_2(\A)$ except for those in $\{\ldots\}$.}
    \end{figure}
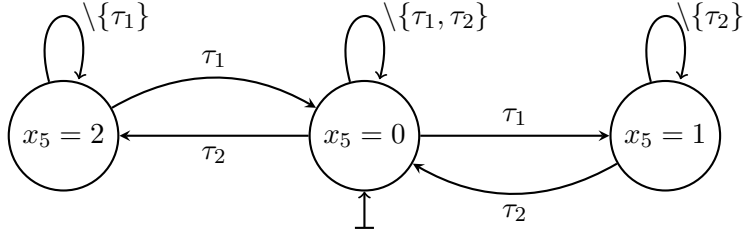

	More precisely, let $k$ be the number of weights in $\A$. Moves corresponding to Eve choosing a transition of $\A$ with weight $i$ have priority $2(k-i+1)-1$, that is, an odd priority that is larger the smaller $i$ is. Further, for each weight $i\in[1..k]$, we use a three-valued variable $x_i\in\{0,1,2\}$, initiated to $0$, which gets updated as follows: if $x_i=0$ and the game move corresponds to Adam choosing a transition of $\A$ with weight $w\leq i$ in one of his runs, $x_i$ is updated to $1$ or $2$ according to which of Adam's run saw this weight; if $x_i=1$ (resp.\ $2$) and the game move corresponds to Adam choosing a transition of $\A$ with weight $w\leq i$ in his second (resp.\ first) run, then $x_i$ is reset to $0$. Moves that reset variables to $0$ have priority $2(k-i+1)$ for the minimal $i$ such that the transition resets $x_i$ to $0$. 
	Other moves have priority~$1$.
	
	We now argue that the highest priority seen infinitely often along a play is even if and only if the $\LimInf$ value of Eve's run is at least as high as that of both of Adam's runs.
	Indeed, the maximal odd priority seen infinitely often on a play is $2(k-i+1)-1$ such that $i$ is the minimal priority seen on Eve's run infinitely often, and the maximal even priority seen infinitely often is $2(k-j+1)$ where $j$ is the minimal weight such that both of Adam's runs see $j$ or a smaller priority infinitely often. In particular, $2(k-i+1)-1<2(k-j+1)$ if and only if $i\geq j$, that is, if Eve wins $G_2(\A)$.
	
	This parity game is of size polynomial in $n$  and exponential in $k$, where the latter is due to the memory structure ($\{0,1,2\}^k$) and has $2k$ priorities. As the number of priorities is logarithmic in the size of the game, it can be solved in polynomial time~\cite{CJKLS17}. If $k$ is in $O(\log n)$, then the algorithm is polynomial in the size $n$ of $\A$.
\end{proof}

These results can be compared to the problem referred to as zero-regret synthesis against word-strategies in~\cite{HPR17}. There, the $\LimInf$ and $\Sup$ cases are solved in exponential time in the size of the automaton via determinisation. In contrast, our procedure is polynomial for $\Sup$, quasipolynomial for $\LimSup$  and exponential in the number of weights for $\LimInf$.\\

In contrast to the cases considered in \cref{sec:G1}, where strategies in $G_1$ induce \HD strategies of the same memory size, a winning $G_2$ strategy does not necessarily induce an \HD strategy of the same memory size (even when it implies the existence of such a strategy). We now analyse the size of the $\HD$ strategies which our proofs show exist whenever Eve wins $G_2$, and discuss the implications for the determinisability of $\HD$ $\LimSup$ automata. 

\begin{cor}\label{cl:MemoryBounds}
	Consider an $\HD$ $\Sup$, $\LimSup$, or  $\LimInf$  automaton $\A$ of size $n$ with $k+1$ weights on infinite words. If $\A$ is a $\Sup$ or $\LimSup$ automaton, there is an $\HD$ strategy for $\A$ with memory polynomial in $n$ if $k$ is in $O(\log n)$, and of memory exponential in $k$ otherwise. If $\A$ is a $\LimInf$ automaton, there is an $\HD$ strategy for $\A$ with memory exponential in $n$.
\end{cor}

\begin{proof}
	We construct an $\HD$ strategy for $\LimSup$ or $\LimInf$ $\A$, by combining an $\HD_k$ strategy and a $G_k$ strategy for it. 
	
	The $\HD_k$ strategy---which, like the $\HD$ strategy, is hard to compute directly---combines the $\HD$ strategies of the $k$ auxiliary B\"uchi or coB\"uchi automata for $\A$, as constructed in \cref{cl:decomposeG2}. For $\HD$ B\"uchi automata, which are equivalent to deterministic automata of quadratic size~\cite{KS15}, there always exist polynomial resolvers: indeed, the letter game can be represented as a polynomial parity game, in which a positional strategy for Eve corresponds to a resolver.
	For $\HD$ coB\"uchi automata on the other hand, these auxiliary strategies might have exponential memory in the number of states of $\A$~\cite{KS15}. 
	
	The $G_k$ strategy on the other hand is positional for $\LimSup$, since it can be encoded as a parity game directly on the $G_k(\A)$ arena, similarly to the reduction in~\cref{cl:solveG2LimSup}; the size of the $G_k(\A)$ arena is $O(n^{k+1})$. The overall $\HD$ strategy for $\LimSup$ therefore needs memory exponential in the number of weights.
	
	For $\LimInf$ on the other hand, by \cref{cl:solveG2liminf,cl:2TokenKToken}, the $G_k$ strategy can do with memory of size $n^{k-1} \cdot 3^{k^2}$. The overall $\HD$ strategy however needs memory exponential in both $n$ and $k$, due to strategies in the component coB\"uchi automata potentially requiring memory exponential in $n$.

	For the $\Sup$ case, recall the $\LimSup$ automaton $\A'$, which has $k*n$ states and $k$ weights, from the proof of~\cref{cl:G2-oSup}, which is \HD if and only if $\A$ is \HD. From the same proof, an \HD strategy in $\A'$ can be interpreted as an \HD strategy in $\A$. Since $\A'$ is of polynomial size in $n$ and has as many weights as $\A$, $\A'$ has an \HD strategy that has memory exponential in $k$, and therefore so does $\A$. If $k$ is in $O(\log n)$, then the memory is polynomial in $n$.
\end{proof}

We leave open whether the memory bounds of \cref{cl:MemoryBounds} can be improved upon. Already for coB\"uchi automata, it is known that deciding whether an automaton is $\HD$ is polynomial despite there being automata for which the optimal $\HD$ strategy needs exponential memory. Hence, at least for the $\LimInf$ case, we cannot expect to do  much better. However, for the $\Sup$ and $\LimSup$ cases, it might be that strategies with polynomial memory suffice.

Our proof does however imply that if a $\LimSup$ automaton $\A$ is $\HD$, then there is a \textit{finite memory} $\HD$ strategy, which implies that $\A$ is determinisable, without increasing the number of weights, by taking a product of $\A$ with the finite $\HD$ strategy. (Recall that every $\LimInf$ automaton can be determinised,  while not every $\LimSup$ automaton can.)

\begin{cor}
	Every \HD $\LimSup$ automaton is equivalent to a deterministic one with at most an exponential number of states and the same set of weights.
\end{cor}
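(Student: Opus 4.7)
The plan is to obtain the deterministic automaton as the standard product of $\A$ with a finite-memory resolver witnessing history-determinism. The previous corollary already gives us the essential ingredient: from an \HD $\LimSup$ automaton $\A$ of size $n$ with $k+1$ weights, we extract an \HD strategy $\strat$ for Eve in the letter game on $\A$ that uses memory of size $M$, where $M$ is exponential in $n$ (and polynomial in $n$ when $k\in O(\log n)$, though we only need the exponential bound here).

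Next, I would define the deterministic $\LimSup$ automaton $\D$ as follows. Its state space is $Q \times M$, where $Q$ is the state space of $\A$ and $M$ is the memory domain of $\strat$. The initial state is $(\iota, m_0)$, where $m_0$ is the initial memory value. For a state $(q,m)$ and letter $\sigma\in\Sigma$, the resolver $\strat$ (viewed as a transducer over $M$) prescribes a unique transition $\trans{q}{\sigma : x}{q'}$ of $\A$ and a memory update $m'$; we set $\delta_\D((q,m), \sigma) = \{(x, (q', m'))\}$, which makes $\D$ deterministic and total. Crucially, the weight $x$ on the new transition is exactly the weight of the underlying transition in $\A$, so $\D$ uses the same set of weights as $\A$.

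It then remains to argue that $\D$ is equivalent to $\A$. The unique run of $\D$ on an infinite word $w$ projects onto a run $\pi$ of $\A$ on $w$ that is precisely the run built by the resolver $\strat$ when Adam plays $w$; by construction, this projected run has the same weight sequence as the run in $\D$. Since $\strat$ is a winning strategy in the letter game, we have $\LimSup(\pi) \geq \A(w)$, and of course $\LimSup(\pi) \leq \A(w)$ since $\pi$ is a run of $\A$ on $w$. Hence $\D(w)=\A(w)$ for every $w$, and the size of $\D$ is $|Q|\cdot |M|$, which is exponential in $n$.

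There is essentially no obstacle: the statement is a direct consequence of the preceding corollary together with the well-known product construction for finite-memory resolvers. The only subtle point worth mentioning explicitly is that passing through the product does not introduce any new weights, which is what allows the phrase ``the same set of weights'' in the statement; this is immediate from the construction since the weight on each transition of $\D$ is copied verbatim from the transition of $\A$ selected by $\strat$.
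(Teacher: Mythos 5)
Your proposal is correct and matches the paper's own argument: the paper likewise obtains the deterministic automaton by taking the product of $\A$ with the exponential-size finite-memory \HD strategy from the preceding corollary, observing that this construction preserves the set of weights. No gaps.
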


\section{Conclusions}
We have extended the token-game approach to characterising history-determinism from the Boolean ($\omega$-regular) to the quantitative setting. 
Already $1$-token games turn out to be useful for characterising history-determinism for some quantitative automata. For $\LimSup$, $\LimInf$ and $\Sup$ automata on infinite words, one token is not enough, but the $2$-token game does the trick. 
Given the correspondence between deciding history-determinism and the best-value synthesis problem, our results also directly provide algorithms both to decide whether the synthesis problem is realisable and to compute a solution strategy.

This application further motivates understanding the limits of these techniques.
Whether the $2$-token game $G_2$ characterises more general Boolean classes of automata beyond B\"uchi and coB\"uchi automata is already an open question. 
Similarly, we leave open whether the $G_2$ game also characterises history-determinism for limit-average automata and other quantitative automata. At the moment we are not aware of examples of automata of any kind (quantitative, pushdown, register, timed, \ldots) for which Eve could win $G_2$ despite the automaton not being history-deterministic, yet even for parity automata, a proof of characterisation remains elusive.

The case of limit average automata is particularly interesting as it is a well-studied value-function. However, deciding the \HDness of these automata presents some additional difficulties: first, the tokens games are less straight-forward to solve, as their winning conditions can no longer be encoded into parity conditions, as is the case for $\LimSup$ and $\LimInf$; second, the characterisation of \HDness is also likely to be challenging, as the techniques we have used here (present-focused value functions, decomposition into auxiliary automata) don't seem to apply.

\section*{Acknowledgment}
  \noindent We thank Guillermo A. P{\'{e}}rez for discussing history-determinism of discounted-sum 
  and limit-average automata.

\bibliographystyle{alphaurl}
\bibliography{gfg}

\end{document}